\documentclass[11pt,letterpaper]{article}

\usepackage{microtype}
\usepackage{amsmath, amsthm}
\usepackage{graphicx}
\usepackage{authblk}
\usepackage{hyperref}
\usepackage[margin=1in]{geometry}

\newtheorem{theorem}{Theorem}
\newtheorem{lemma}[theorem]{Lemma}
\newtheorem{corollary}[theorem]{Corollary}
\newtheorem{definition}[theorem]{Definition}



 
\usepackage{comment}  



\title{Subcubic Equivalences Between Graph Centrality Measures and Complementary Problems\footnote{Supported in part by NSF CAREER award 1053605, NSF grant CCF-1161626, ONR YIP award N000141110662, and a DARPA/AFOSR grant FA9550-12-1-0423.}
}


\author[1]{Mahdi Boroujeni}
\author[2]{Sina Dehghani}
\author[3]{Soheil Ehsani}
\author[3]{MohammadTaghi HajiAghayi}
\author[3]{Saeed Seddighin}
\affil[1]{Department of Computer Engineering, Sharif University of Technology \url{safarnejad@ce.sharif.edu}}
\affil[2]{School of Mathematics, Institute for Research in Fundamental Sciences (IPM) \url{dehghani@ipm.ir}}
\affil[3]{Department of Computer Science, University of Maryland \url{ehsani, hajiagha, sseddigh@umd.edu}}
\date{}



\newcommand{\APSP}{\ensuremath{\mathsf{APSP}}}
\newcommand{\APSPV}{\ensuremath{\mathsf{APSPVerification}}}
\newcommand{\APSPVV}{\ensuremath{\mathsf{APSPVerificationIndex}}}
\newcommand{\COAPSPV}{\ensuremath{\mathsf{CoAPSPVerification}}}
\newcommand{\Radius}{\ensuremath{\mathsf{Radius}}}
\newcommand{\Center}{\ensuremath{\mathsf{Center}}}
\newcommand{\Median}{\ensuremath{\mathsf{Median}}}
\newcommand{\RadiusVertex}{\ensuremath{\mathsf{RadiusVertex}}}
\newcommand{\MedianV}{\ensuremath{\mathsf{MedianVertex}}}
\newcommand{\Diameter}{\ensuremath{\mathsf{Diameter}}}
\newcommand{\DiameterE}{\ensuremath{\mathsf{DiameterVertex}}}
\newcommand{\NEG}{\ensuremath{\mathsf{NegativeTriangle}}}
\newcommand{\NEGV}{\ensuremath{\mathsf{NegativeTriangleVertex}}}
\newcommand{\CONEG}{\ensuremath{\mathsf{CoNegativeTriangle}}}

\newcommand{\COCenterVertex}{\ensuremath{\mathsf{coCenter}}}
\newcommand{\CoRadius}{\ensuremath{\mathsf{CoRadius}}}
\newcommand{\CODiameter}{\ensuremath{\mathsf{CoDiameter}}}
\newcommand{\COMedian}{\ensuremath{\mathsf{CoMedian}}}
\newcommand{\COMedianVertex}{\ensuremath{\mathsf{coMedian}}}

\newcommand{\subless}{\le_{n^3}}
\newcommand{\subequal}{=_{n^3}}
\newcommand{\otilda}{\widetilde{O}}
\newcommand{\CORadius}{\ensuremath{\mathsf{CoRadius}}}
\newcommand{\poly}{\mathsf{poly}}

\newtheorem{observation}{Observation}[section]
\newenvironment{proofof}[1]{{\bf Proof of #1:  }}{\hfill\rule{2mm}{2mm}}

\begin{document}

\maketitle

\begin{abstract}
	Despite persistent efforts, there is no known technique for obtaining unconditional super-linear lower bounds for the computational complexity of the problems in P. Vassilevska Williams and Williams \cite{focs} introduce a fruitful approach to advance a better understanding of the computational complexity of the problems in P. In particular, they consider \textit{All Pairs Shortest Paths} (\APSP{}) and other fundamental problems such as checking whether a matrix defines a metric, verifying the correctness of a matrix product, and detecting  a negative triangle in a graph.

Abboud, Grandoni, and Vassilevska Williams \cite{soda} study well-known graph centrality problems such as \Radius{}, \Median{}, etc., and make a connection between their computational complexity to that of two fundamental problems, namely \APSP{} and \textit{\Diameter{}}. They
show any algorithm with subcubic running time for these centrality problems, implies a subcubic algorithm for either \APSP{} or \Diameter{}.

In this paper, we define vertex versions for these centrality problems and based on that we introduce new complementary problems. The main open problem of \cite{soda} is whether or not \APSP{} and \Diameter{} are equivalent under subcubic reduction. One of the results of this paper is \APSP{} and \CODiameter{}, which is the complementary version of \Diameter{}, are equivalent. Moreover, for some of the problems in this set, we show that they are equivalent to their complementary versions. Considering the slight difference between a problem and its complementary version, these equivalences give us the impression that every problem has such a property, and thus \APSP{} and \Diameter{} are equivalent. This paper is a step forward in showing a subcubic equivalence between \APSP{} and \Diameter{}, and we hope that the approach introduced in our paper can be helpful to make this breakthrough happen.


\end{abstract}

\section{Introduction}

Computational complexity focuses on classifying algorithmic problems mostly through providing lower bounds to show solving a certain problem requires at least a certain amount of time, memory/space, number of gates in a circuit, etc.
%
However, despite persistent efforts, still, there is no known technique for proving unconditional super-linear lower bounds for polynomially solvable problems.
Vassilevska Williams and Williams \cite{focs} introduce a fruitful approach to provide evidence that significantly improving the running time for solving a certain set of problems in P is unlikely.
Their approach is to use reductions to show improving upon a given upper bound for a computational problem, implies improving a breakthrough algorithm for another famous and fundamental problem. More specifically, consider a well-studied problem $A$ for which the best-known algorithm has running time $\otilda(n^c)$\footnote{The $\otilda$ notation suppresses $\text{polylog}$ terms in $n$ and $M$, where $M$ is an upper bound on the weights of the input graph.}. By providing a reduction from another problem $B$ to $A$, it can be shown that an $\otilda(n^{c'-\epsilon})$ time algorithm for problem $B$, for a constant $\epsilon>0$, yields an $\otilda(n^{c-\delta})$ time algorithm for problem $A$, for another constant $\delta>0$. This means it is unlikely to obtain an $\otilda(n^{c'-\epsilon})$ time algorithm for problem $B$. 
For $c,c' = 3$ a reduction of the above kind is called a \textit{subcubic\footnote{Sometimes these reductions and time complexities are called \textbf{truly} subcubic since they do not count subpolynomial improvements such as polylogarithmic factors on cubic times.} reduction}. Two problems $A$ and $B$ are called \textit{subcubic equivalent}, if there is a subcubic reduction from $A$ to $B$ and a subcubic reduction from $B$ to $A$ \cite{soda, focs}. 

Vassilevska Williams and Williams \cite{focs} prove a subcubic equivalence between \APSP{} and seven other fundamental problems, such as checking whether a matrix defines a metric, verifying the correctness of a matrix product over the (min, +)-semiring, and detecting if a weighted graph has a triangle of negative total edge weight. Since then several works used the same approach to obtain interesting hardness results for polynomially solvable problems 
such as edit distance~\cite{editdistance}, LCS~\cite{lcs}, a number of dynamic problems~\cite{abboud2014popular, patrascu2010towards}, RNA folding~\cite{valiant}, and tree edit distance~\cite{ted}.

In the past few decades, there has not been any significant improvement or computational lower bound for graph centrality problems, especially for \APSP{}. Therefore, proving a subcubic equivalence between a certain problem with cubic time and \APSP{} could be ``a huge and unexpected algorithmic breakthrough''~\cite{soda}. Floyd \cite{floyd} and Warshall \cite{warshall} proposed an $O(n^3)$ algorithm for \APSP{} in 1962. There have been many attempts to improve this running time.
Nonetheless, the best-known algorithm for \APSP{} runs in time $O(\frac{n^3}{2^{\Theta(\sqrt{\log n})}})$\footnote{This still is not $O(n^{3-\epsilon})$ for a positive constant $\epsilon$.} \cite{williams2014faster}. However, still ``One of the Holy Grails of the graph algorithms is to determine whether this cubic complexity is inherent, or whether a significant improvement (\textit{e.g.}, an $O(n^{2.99})$ time) is possible''~\cite{focs}.

Abboud, Grandoni, and Vassilevska Williams \cite{soda} study a series of fundamental graph centrality problems having lots of applications such as finding influential person(s) in social networks, finding key infrastructure nodes in the Internet or urban networks, and detecting super-spreaders of disease.
The problems they consider are \Radius{}, \Median{}, \Diameter{}, etc., for which the fastest known algorithms are of $\otilda(n^{3-o(1)})$ running time. Abboud \textit{et al.} \cite{soda} make a connection between the complexity of these problems to that of two fundamental problems, namely \APSP{} and \Diameter{}. In \Diameter{}, we are asked to find the maximum distance between any two nodes of a graph. 
They prove \APSP{}, \Radius{}, and \Median{} are equivalent under subcubic reductions, \textit{i.e.}, a subcubic algorithm for any of these problems implies a subcubic algorithm for all of the others. They also show \Diameter{}, reach centrality, and any constant factor approximation algorithm for betweenness centrality are equivalent under subcubic reductions. 
 
However, the main open question is whether we can obtain a similar connection between \Diameter{} and \APSP{}. It is straightforward to show a reduction from \Diameter{} to \APSP{}; Once you have all the distances between the nodes, you can find the maximum distance in time $O(n^2)$ but is there a subcubic reduction from \APSP{} to \Diameter{}? Or can the largest distance between vertices\footnote{Namely diameter.} of a graph be calculated faster\footnote{In a subcubic time.} than the time required to calculate all pairwise distances?

In this paper, we consider a complementary version of \Diameter{} and relate its computational complexity to \APSP{}. In particular, we define \CODiameter{} as the problem of finding a vertex of the input graph which is not an endpoint of a diameter and show a subcubic reduction from \APSP{} to \CODiameter{}.

\begin{theorem}\label{thm1.1}
	\APSP{} and \CODiameter{} are subcubic equivalent.
\end{theorem}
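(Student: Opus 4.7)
The plan is to show both directions of the equivalence. The easy direction $\CODiameter{}\subless\APSP{}$ is immediate: once \APSP{} returns all pairwise distances, we compute the diameter $D$ and then scan the $n$ vertices, returning any $v$ whose eccentricity is strictly smaller than $D$ (or reporting ``none''). The extra work is $O(n^2)$, so any truly subcubic algorithm for \APSP{} yields one for \CODiameter{}.

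For the harder direction $\APSP{}\subless\CODiameter{}$, I would route through the subcubic equivalence $\APSP{}\subequal\NEG{}$ of Vassilevska Williams and Williams~\cite{focs} and then reduce \NEG{} to \CODiameter{}. Given a tripartite weighted input $G=(A\cup B\cup C,w)$, the goal is to build an auxiliary graph $H$ of size $O(n)$ together with a distinguished vertex $v^{\star}$ such that (a) the diameter of $H$ equals a fixed value $D$ regardless of whether a negative triangle is present in $G$, and (b) $v^{\star}$ is an endpoint of some diametral pair in $H$ if and only if $G$ contains a negative triangle (or, in a localized version involving a prescribed ``first'' vertex, so that by iterating over the $n$ choices one recovers all \NEG{} witnesses and thus solves \APSP{}). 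A single call to \CODiameter{}$(H)$ then reveals this bit: $v^{\star}$ is returned as a non-endpoint precisely when no such triangle exists. Following the three-layer product gadgets of Abboud, Grandoni, and Vassilevska Williams~\cite{soda}, I would stack copies of $A,B,C$ so that shortest paths through $v^{\star}$ encode the triangle sum $w(i,j)+w(j,k)+w(k,i)$, and pad with dummy edges of carefully chosen weight so that only a negative-weight triangle can push the distance from $v^{\star}$ up to the anchored value $D$.

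The main obstacle is condition (a): keeping the diameter pinned at $D$ \emph{independently} of whether a negative triangle exists, so that the only observable change between yes and no instances is whether $v^{\star}$ itself participates in a diametral pair. This is exactly where the complementary formulation becomes essential: an ordinary \Diameter{} query returns only the pair realizing the maximum distance and thus carries $O(\log n)$ bits per call, whereas \CODiameter{} implicitly certifies the endpoint status of every vertex and in particular of $v^{\star}$ in a single call, avoiding the linear-factor blowup that would otherwise be fatal. Once the diameter-preserving gadget is in place, the remaining steps---iterating over the $O(n)$ prescribed vertices, using a logarithmic binary search over weight thresholds to recover actual distance values when needed, and invoking the \APSP{}/\NEG{} equivalence of~\cite{focs}---are routine bookkeeping and fit inside a truly subcubic budget.
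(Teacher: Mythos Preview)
Your easy direction is fine. The hard direction, however, rests on a misunderstanding of what a \CODiameter{} oracle actually returns. The oracle hands you \emph{one arbitrary} vertex that is not a diameter endpoint (or reports that none exists); it does \emph{not} ``implicitly certify the endpoint status of every vertex,'' and in particular it tells you nothing about your distinguished $v^{\star}$ unless $v^{\star}$ happens to be the vertex it chooses to output. In your gadget, if $G$ has no negative triangle you want $v^{\star}$ to be a non-endpoint, but nothing in your construction forces every \emph{other} vertex of $H$ to be an endpoint, so the oracle may well return some auxiliary layer vertex and leave the status of $v^{\star}$ undetermined. Guaranteeing that $v^{\star}$ is the \emph{unique} possible non-endpoint is precisely the hard part, and you have not addressed it; ``pad with dummy edges of carefully chosen weight'' is not a construction. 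Iterating over $n$ prescribed first vertices does not help, since each individual call has the same defect.

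The paper sidesteps this entirely by using only the \emph{existence} bit of \CODiameter{} (``is there any non-endpoint?''), never the identity of the returned vertex. Concretely, it routes $\APSP \subequal \Radius$ (from \cite{soda}) and then reduces \Radius{} to \CODiameter{}: given $G$, add two fresh vertices $x,y$ each joined to all of $V(G)$ by edges of weight $q$. In this augmented graph, some vertex fails to be a center (equivalently, by the observation that radius equals diameter iff every vertex is both a center and a diameter endpoint, some vertex fails to be a diameter endpoint) if and only if the radius of $G$ is strictly less than $2q$. A binary search on $q$ then recovers the radius with $O(\log(nM))$ calls to \CODiameter{}. No reduction through \NEG{} is used, and no query about a specific vertex is ever needed.
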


Furthermore, we define complementary problems for other fundamental problems studied before, such as \CORadius{} and \COMedian{}.
In this paper, we prove subcubic equivalences between \APSP{}, \COMedian{}, and \CORadius{}, which lead to subcubic equivalences between \Median{} and \COMedian{}, and \Radius{} and \CoRadius{}.

\begin{theorem}\label{thm1.2}
\APSP{}, \COMedian{}, and \CORadius{} are all subcubic equivalent.
\end{theorem}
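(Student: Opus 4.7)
To prove Theorem~\ref{thm1.2} it suffices to establish the four reductions $\COMedian \subless \APSP$, $\CORadius \subless \APSP$, $\APSP \subless \COMedian$, and $\APSP \subless \CORadius$; composing them yields the three-way equivalence. The first two, which I will call the ``easy'' direction, are immediate: once \APSP{} has been solved we compute the distance sum of every vertex (for \Median{}) and the eccentricity of every vertex (for \Radius{}) in $O(n^2)$ time, identify the set of medians (resp.\ centers) in $O(n)$ time, and return any vertex outside this set. This uses a single call to \APSP{} on the same graph plus $O(n^2)$ overhead, which is subcubic whenever \APSP{} is.

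For the two hard reductions the plan is to adapt the known subcubic reductions from \APSP{} to \Median{} and from \APSP{} to \Radius{} established in \cite{soda}, modifying the gadgets so that the answer is revealed by \emph{any} non-median (resp.\ non-center) rather than by the minimizer itself. Concretely, starting from \NEG{}---which is subcubic equivalent to \APSP{} by \cite{focs}---I would build an auxiliary graph on $O(n)$ vertices in which the set of medians is pinned down in advance to a small, predictable family of candidate vertices, ideally only two. The candidate that \emph{is} the median encodes the answer to the underlying negative-triangle query, so the non-median returned by \COMedian{} determines the answer just as well, by exclusion. The same template should drive the \CORadius{} reduction, with the distance-sum analysis replaced by an eccentricity analysis; the gadget weights only need to be retuned so that the candidates' eccentricities, rather than their distance sums, swap order depending on whether a negative triangle exists.

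The main obstacle is batching: resolving triangles one at a time would incur $\Omega(n^3)$ overhead on top of each \COMedian{} call, so I cannot afford a separate call per triangle. The gadget must therefore encode $\Theta(n^2)$ queries inside a single augmented graph whose non-median can be decoded in $O(n^2)$ time, in the spirit of the matrix-style constructions from \cite{focs, soda}. Designing this gadget so that (i) the set of possible medians is small and fixed in advance, (ii) the answer to a batch of negative-triangle queries is encoded in which candidate is \emph{not} the median, and (iii) the whole auxiliary graph still has only $O(n)$ vertices, is the heart of the proof. Once it is in place, transitivity with the known chain $\APSP \subequal \Median \subequal \Radius$ from \cite{soda} also yields the corollary equivalences $\Median \subequal \COMedian$ and $\Radius \subequal \CoRadius$ mentioned in the introduction.
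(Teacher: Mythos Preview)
Your ``exclusion'' plan for the hard direction has a genuine gap. \COMedian{} (and \CoRadius{}) returns an \emph{arbitrary} non-median (non-center), or else reports that every vertex is a median. If your auxiliary graph has $O(n)$ vertices and you arrange that the median is always one of two designated candidates, then all the remaining $\Omega(n)$ vertices are non-medians, and the oracle is free to hand back any of them; that output carries no information about which candidate won. The only way your exclusion argument works is if the graph has essentially just the two candidates, which cannot encode a nontrivial instance. The batching worry is also misplaced: since \NEG{} is already a single yes/no question that is subcubic-equivalent to \APSP{} by \cite{focs}, one bit per oracle call is enough---you never need to pack $\Theta(n^{2})$ queries into one gadget.

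The paper exploits exactly the one bit the oracle reliably gives, namely ``is every vertex optimal?''. For \CoRadius{} (Lemma~\ref{radToCorad}), two fresh vertices $x,y$ are attached at distance $q$ to all of $V(G)$; every vertex then has eccentricity at most $2q$, with equality at $x$ and $y$, so $G'$ has a non-center iff the radius of $G$ is strictly below $2q$. Binary search on $q$ recovers the radius using $\otilda(1)$ calls. For \COMedian{} (Lemma~\ref{lemlemeto}), a symmetric three-layer copy of $G$ is built so that every vertex has the same fixed distance sum $(n-1)H/2+3nH$ precisely when $G$ has no negative triangle; one \COMedian{} call plus a single Dijkstra from any vertex decides \NEG{}. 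In both cases the reduction never tries to identify the optimum via the returned vertex---it only asks whether a non-optimum exists.
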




We also make a connection between the computational complexities of \CONEG{} and \COAPSPV{} to that of the \Diameter{} problem.
In particular, the reduction from \CONEG{} to \Diameter{} is of special interest, since a reduction from \NEG{} to \Diameter{} would resolve the open problem of reducing \APSP{} to \Diameter{}. Moreover, the reduction from \CONEG{} to \NEG{} yields a common source for the hardness of both \APSP{} and \Diameter{}.
\begin{theorem}
	There exists a subcubic reduction from \CONEG{} to \Diameter{}.
\end{theorem}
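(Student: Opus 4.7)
The plan is to reduce \CONEG{} to \Diameter{} with a single graph construction whose diameter encodes the existence of a vertex lying in no negative triangle. If the input is a tripartite weighted graph on $(I,J,K)$ with weights $w$, set $f(a)=\min_{b\in J,\,c\in K}[w(a,b)+w(b,c)+w(a,c)]$; then \CONEG{} is exactly the decision ``$\max_{a\in I}f(a)\ge 0$''. Since $\Diameter(G)=\max_{u,v}\min_{P:u\to v}w(P)$ is a max--min quantity of the same shape, the natural strategy is a direct encoding: build a graph $G$ on $O(n)$ vertices so that for every candidate witness $a$ there is a pair $(s_a,t_a)$ with $d_G(s_a,t_a)=f(a)+C$ for a known additive constant $C$, and so that every other pair has strictly smaller distance; then $\Diameter(G)=\max_a f(a)+C$ and a single oracle call decides \CONEG{}.

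The backbone of $G$ is the obvious triangle gadget. For every $a\in I$ introduce a source $s_a$ and a sink $t_a$; for every $b\in J$ a vertex $b$; for every $c\in K$ a vertex $c$. Add directed edges $s_a\to b$, $b\to c$, $c\to t_a$ of weights $w(a,b)+M$, $w(b,c)+M$, $w(a,c)+M$ respectively, where $M$ bounds $|w|$ so that all weights become non-negative. Because the only $s_a\to t_a$ walks route through exactly one $b$ and one $c$, the diagonal distance is $d_G(s_a,t_a)=f(a)+3M$, so each diagonal pair already carries the quantity we want the diameter to read off.

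The main obstacle, and the technical heart of the reduction, is controlling the non-diagonal pairs. In the raw construction, $d_G(s_a,t_{a'})=\min_{b,c}[w(a,b)+w(b,c)+w(a',c)]+3M$ is a ``mixed-triangle'' value not comparable to either $f(a)$ or $f(a')$ and possibly exceeding both, and auxiliary pairs such as $(s_a,c)$ and $(b,t_a)$ can contribute distances of magnitude up to $\Theta(M)$. My plan to tame these is to add a thin layer of auxiliary structure: per-index shortcut hubs $h_a$ with an edge $s_a\to h_a$ of weight zero and edges $h_a\to t_{a'}$ of small weight for every $a'\neq a$ (crucially omitting $h_a\to t_a$), together with analogous bypass edges that clip every $(s_a,c_k)$ and $(b_j,t_a)$ pair below threshold. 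The hub weights must be set so that every off-diagonal or off-gadget distance lies strictly below the signal level $f(a^\ast)+3M$ at the maximizing witness $a^\ast$, while no shortcut is available for the diagonal pair itself, because $h_a$ has no edge to $t_a$ and the triangle gadget remains the only $s_a\to t_a$ route.

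With the non-diagonal distances capped in this way, one call to the \Diameter{} oracle on $G$ decides whether $\max_a f(a)\ge 0$, and a binary search over an additive shift of the weights $w(b,c)$---which translates uniformly into a shift of $f(a)$---pins down $\max_a f(a)$ in $O(\log(nM))$ further calls and, with a standard vertex-splitting trick, also identifies a witness. Since $|V(G)|=O(n)$ and we invoke the oracle only polylogarithmically many times, any truly subcubic algorithm for \Diameter{} immediately yields a truly subcubic algorithm for \CONEG{}. The step I expect to be hardest is the simultaneous calibration of the hub weights, since the clipping thresholds for the different families of off-diagonal pairs must all sit strictly below the signal $\max_a f(a)+3M$ while no individual shortcut is allowed to contaminate any diagonal distance $d_G(s_a,t_a)$; making this case analysis airtight is where the real work of the reduction lies.
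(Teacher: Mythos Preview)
Your plan is essentially the paper's approach: a layered triangle gadget whose ``diagonal'' distance $d(s_a,t_a)$ equals $f(a)+3M$, together with shortcut edges that clip every off-diagonal pair below threshold. Your hub $h_a$ with edges to each $t_{a'}$ for $a'\neq a$ is morally identical to the paper's direct weight-$0$ edge $u_A$--$v_D$ for $u\neq v$. Where your proposal has concrete gaps is in the remaining pairs. First, as written your graph is a layered DAG, so pairs such as $(t_a,s_{a'})$ or $(c,b)$ have distance $\infty$ and the diameter oracle returns $\infty$ regardless of the input; you must either enforce strong connectivity without creating new short $s_a\to t_a$ routes, or switch to an undirected construction. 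Second, the ``analogous bypass edges'' for $(s_a,c)$ and $(b,t_a)$ are genuinely delicate: any edge you introduce into a $c$-vertex or out of a $b$-vertex participates in new $s_a\to t_a$ walks and can undercut the very diagonal distance you are protecting. You flag this calibration as the hard step, but you have not proposed a mechanism that prevents contamination.

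The paper dissolves both difficulties with one device you are missing: it works undirected and appends two padding layers $X,Y$ outside the gadget $A$--$B$--$C$--$D$, with a single weight-$H$ edge $v_X$--$v_A$ and $v_D$--$v_Y$ per vertex. Connectivity is then automatic, and because the outer edges carry large weight $H$, the diameter is always realized between $X$ and $Y$; every pair not of this form is shorter by construction and needs no bypass at all. That leaves only the off-diagonal $X$--$Y$ pairs $(u_X,v_Y)$ with $u\neq v$, which are clipped by the single shortcut $u_A$--$v_D$, while the diagonal pair $(v_X,v_Y)$ has no such shortcut and must traverse $B$ and $C$, giving distance $5H$ plus the minimum triangle through $v$. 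In other words, the padding layers let you replace your open-ended calibration problem by a one-line case analysis; this is the missing idea you should adopt.
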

\begin{theorem}
	There exists a subcubic reduction from \Diameter{} to \COAPSPV{}.
\end{theorem}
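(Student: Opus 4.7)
The plan is to reduce $\Diameter$ to $\COAPSPV$ via binary search on the diameter value. Since the diameter is an integer in $[0, nM]$, only $O(\log(nM))$ rounds of binary search are needed; each round will perform one call to $\COAPSPV$ together with $O(n^2)$ auxiliary work, which preserves the total subcubic running time provided $\COAPSPV$ itself is subcubic. The task in each round is to convert the decision question ``is $\Diameter(G) > \Delta$?'' into a single $\COAPSPV$ instance.

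For a candidate threshold $\Delta$, I would construct an auxiliary graph $G_\Delta$ from $G$ by adding an apex vertex $v$ connected to every original vertex by an edge of weight $\Delta/2$. In $G_\Delta$, the shortest path distance between any two original vertices $i,j$ equals $\min(d_G(i,j), \Delta)$, so the apex vertex effectively caps every distance at $\Delta$. I would then design the candidate distance matrix $M_\Delta$ so that its entries agree with $\text{APSP}(G_\Delta)$ on every ``safe'' coordinate while leaving only the diameter-sensitive entries exposed to the $\COAPSPV$ check. A first attempt is $M_\Delta[i,i] = 0$, $M_\Delta[i,j] = \Delta$ for $i\neq j$ on original vertices, and the known distance $\Delta/2$ for entries involving $v$; the $\COAPSPV$ query on $(G_\Delta, M_\Delta)$ reports a witness of disagreement precisely when some pair in $G$ lies off the cap, which after choosing the polarity of the comparison (e.g.\ capping at $\Delta$ versus $\Delta-1$) discriminates between $\Diameter(G) > \Delta$ and $\Diameter(G) \le \Delta$.

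The main technical obstacle is ensuring that $\COAPSPV$ discriminates in the \emph{correct} direction: a naive choice of $M_\Delta$ will also report discrepancies on pairs with $d_G(i,j)$ strictly below the cap, which are ubiquitous and uninformative. Suppressing these spurious witnesses without recomputing $\text{APSP}$ (which would defeat the reduction) requires anchoring $M_\Delta$ on the ``below-cap'' entries using a modest amount of precomputation. The plan is to use a single Dijkstra sweep from the apex vertex $v$ (costing $O(n^2)$ per round) together with the structure of $G_\Delta$ to fix exactly those entries of $M_\Delta$ whose value is determined by paths through $v$; by construction these are the pairs with $d_G(i,j) \ge \Delta$ routed via the shortcut, while the pairs with $d_G(i,j) < \Delta$ either receive their true (already-known-upper-bounded) value or are made indistinguishable from the correct value by a symmetric auxiliary construction. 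Once $M_\Delta$ is anchored this way, a $\COAPSPV$ witness can only arise from a genuinely diameter-sized pair, and the binary search then pins down the exact diameter in $O(\log(nM))$ rounds, yielding the claimed subcubic reduction.
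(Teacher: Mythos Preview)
Your high-level plan---binary search on a threshold $\Delta$, cap all distances at $\Delta$, and feed a candidate matrix to $\COAPSPV$---is exactly the paper's strategy. However, you have the semantics of $\COAPSPV$ reversed, and this derails the argument. By definition, $\COAPSPV$ either reports that \emph{no} entry of $D$ is correct or returns a pair $(i,j)$ for which $D_{i,j}$ \emph{equals} the true distance; it does not return a ``witness of disagreement''.

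With the correct semantics, your ``first attempt'' already does the right thing on the interesting entries: in $G_\Delta$ the distance between original vertices $i\neq j$ is $\min(d_G(i,j),\Delta)$, so the entry $M_\Delta[i,j]=\Delta$ is correct precisely when $d_G(i,j)\ge\Delta$, and $\COAPSPV$ returns such a pair iff the diameter of $G$ is at least $\Delta$. The obstacle you flag---``discrepancies on pairs with $d_G(i,j)$ strictly below the cap''---does not exist: those entries are \emph{incorrect}, which is harmless for $\COAPSPV$. The real danger is the opposite one: you have filled the diagonal with $0$ and the apex rows/columns with $\Delta/2$, all of which are \emph{always correct}, so $\COAPSPV$ may return one of those regardless of the diameter. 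Your proposed Dijkstra-and-anchoring step would only make this worse, since anchoring more entries to their true values produces more spurious correct entries for $\COAPSPV$ to report.

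The paper avoids all of this by dropping the apex vertex: it adds a direct edge of weight $d$ between every pair, so distances in $G'$ are $\min(d_G(i,j),d)$, and takes $D$ to be $d$ off the diagonal. One $\COAPSPV$ call per binary-search step then decides whether some pair has distance exactly $d$ in $G'$, i.e.\ whether the diameter of $G$ is at least $d$. (For full cleanliness one should put a value other than $0$ on the diagonal so it cannot serve as a trivial witness; the same remark applies to your construction.)
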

\begin{theorem}
	There exists a subcubic reduction from \CONEG{} to \NEG{}.
\end{theorem}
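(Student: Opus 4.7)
The plan is to reduce \CONEG{} to a single $(\min,+)$ matrix product and then invoke the Vassilevska Williams--Williams subcubic equivalence between $(\min,+)$ matrix product and \NEG{} from \cite{focs}.

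First I would reformulate the \CONEG{} condition in the standard tripartite setup with weight matrices $A,B,C$: a vertex $i$ is not an endpoint of any negative triangle iff
\[
\min_{j,k}\bigl(A[i,j] + B[j,k] + C[i,k]\bigr) \;\geq\; 0,
\]
and, pulling the minimization over $j$ inside, this is equivalent to
\[
\min_{k}\bigl((A \star B)[i,k] + C[i,k]\bigr) \;\geq\; 0,
\]
where $\star$ denotes the $(\min,+)$ matrix product. The same decomposition applies to the general-graph version via $W \star W$ with $+\infty$ on the diagonal, and the three tripartite ``sides'' (looking for bad vertices in $I$, in $J$, and in $K$) are covered by three symmetric applications.

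Once $M := A \star B$ is available, iterating over all $i$ and checking $\min_k\bigl(M[i,k]+C[i,k]\bigr) \geq 0$ takes only $O(n^2)$ time, so the total running time of the resulting \CONEG{} algorithm is dominated by the cost of a single $(\min,+)$ matrix product. By the subcubic equivalences in \cite{focs}, a truly subcubic algorithm for \NEG{} yields one for $(\min,+)$ matrix product, so $M$ can be computed in $\otilda(T(n))$ time using \NEG{} as a black-box oracle, where $T(n)$ is the \NEG{} running time. This gives the desired reduction $\CONEG{} \subless \NEG{}$.

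The key step is noticing that the \CONEG{} predicate factorizes cleanly into a $(\min,+)$ matrix product followed by a linear-time rowwise scan; after that, the reduction is essentially a direct appeal to the VW--W toolkit. I expect the only subtleties to be book-keeping---handling degenerate ``triangles'' with repeated vertices via the $+\infty$ diagonal, and applying the argument symmetrically to each of the three tripartite parts (or to all of $V$ in the general-graph formulation)---none of which should pose an essential difficulty.
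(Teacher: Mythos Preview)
Your proof is correct and follows essentially the same route as the paper. Both arguments observe that the predicate ``vertex $v$ lies in no negative triangle'' amounts to checking $\min_{j,k}\bigl(w(v,j)+w(j,k)+w(k,v)\bigr)\geq 0$, compute this quantity for all $v$ via a problem in the \APSP{} subcubic-equivalence class of \cite{focs}, and then invoke that equivalence to obtain \NEG{} as the black-box oracle. The only difference is the choice of intermediate: the paper builds a four-layer graph and calls \APSP{} on it (so that the distance from $v_A$ to $v_D$ encodes the minimum triangle weight through $v$), whereas you compute a single $(\min,+)$ product $A\star B$ and finish with an $O(n^2)$ rowwise scan against $C$. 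Your version is marginally more direct---one product rather than a full \APSP{} call---but the underlying idea is identical.
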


The number of the problems considered in this paper may be high; however, Figure \ref{fig:results} perfectly illustrates the time complexity relations between the problems mentioned above. Note that in Figure \ref{fig:results} any path from a problem $A$ to s problem $B$ denotes a subcubic reduction from problem $A$ to problem $B$. Prior reductions are shown via dotted arrows, except trivial reductions to \APSP{} which are not shown for the sake of clarity.

\begin{figure}
	\centering
	\includegraphics[width=9cm]{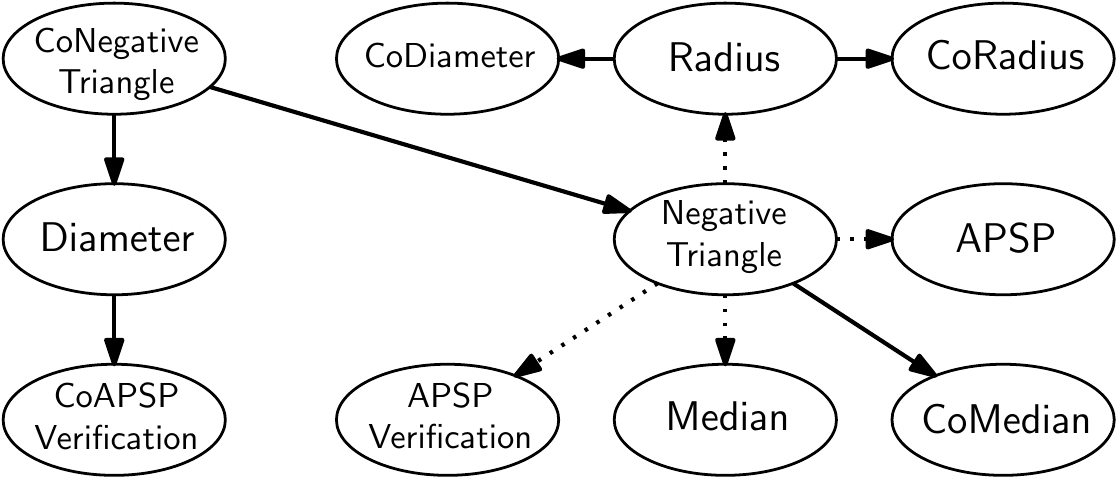}
	
	\caption{Dotted arrows show reductions prior to this work, and solid arrows illustrate the reductions that we present in this work. Note that we omit trivial reductions to \APSP{} here.}
\end{figure}\label{fig:results}

\section {Related Work}
The most related studies to this paper are by Vassilevska Williams and Williams \cite{focs} and Abboud, Grandoni, and Vassilevska Williams \cite{soda}. Vassilevska Williams \textit{et al.} \cite{focs} introduce the notion of subcubic reduction and prove subcubic equivalences between \APSP{}  and seven other fundamental problems. Abboud \textit{et al.} \cite{soda} use the same approach to obtain subcubic equivalences among \APSP{}, \Diameter{}, and graph centrality problems such as \Radius{} and \Median{}. They show any subcubic algorithm for graph centrality problems can be used as a black box to obtain a subcubic algorithm for \APSP{} or \Diameter{}.
Furthermore, Lincoln, Vassilevska Williams, and Williams study similar problems in sparse graphs~\cite{soda3}.

As mentioned above, \APSP{} is among the most well-studied 
problems in P, for which there has been a tremendous amount of work to improve its running time 
(see, \textit{e.g.}, \cite{chan2010more, fredman1976new, han2012n, takaoka2005n3loglogn, williams2014faster, zwick2005slightly}).
Williams \cite{williams2014faster} proves there exists an $O(\frac{n^3}{2^{\Theta(\sqrt{\log n})}})$ time algorithm for \APSP{}, which is the best-known algorithm so far. However, there are faster algorithms for graphs with small integer weights (see \cite{shoshan1999all, zwick2002all}).

The \textit{Strong Exponential Time Hypothesis (SETH)} of Impagliazzo, Paturi, and Zane \cite{impagliazzo1999complexity, impagliazzo1998problems}, has also been an extremely popular conjecture and a powerful tool to provide surprising lower bounds on different problems. According to SETH, there is no $O((2-\epsilon)^n \text{poly}(n))$ time algorithm to determine the satisfiability of an $n$-variable CNF formula for some positive $\epsilon$.
Roditty and Vassilevska Williams \cite{roditty2013fast} show lower bounds for approximating the diameter of sparse graphs using SETH. 
Abboud, Vassilevska Williams, and Wang study the complexity of computing different versions of diameter and radius of sparse graphs using a similar approach~\cite{soda2}.

\section{Problems}
In all of the problems that we study, we assume the given graph has $n$ vertices and $m$ edges. We refer to the vertex set and edge set of a graph $G$ by $V(G)$ and $E(G)$, respectively. For brevity, sometimes we omit the terms directed and weighted, but all of the graphs are considered to be both directed and weighted unless otherwise stated. Also, the weights of the edges are integer numbers between $-M$ and $M$ where $M$ is a large enough integer number that is polynomially bounded by $n$\footnote{The condition of $M = \text{poly}(n)$ is a standard assumption in the definition of \APSP{} and related problems~\cite{ipecsurvey}. However, in cases that $M$ is not polynomially bounded by $n$, a $\text{polylog}\,M$ factor is added to the running time which is natural~\cite{focs} and hidden under the $\otilda$ notation. Moreover, some of our reductions may introduce a $\text{polylog}\,M$ term which can be cut off using a randomized technique similar to that of~\cite{focs}.}. We assume all of the basic operations (addition, subtraction, multiplication, etc.) take $O(1)$ time. Whenever we use $\infty$, it represents a number larger than any other integer number including $M$. Similarly, $-\infty$ is always strictly less than any integer number including $-M$. If there is no edge between a pair of vertices, we assume an edge with weight $\infty$ for that pair. Addition and multiplication of positive numbers to $\infty$ and $-\infty$ result in $\infty$ and $-\infty$ respectively, except for multiplication by zero which is zero. 

We say a problem $A$ is \textit{subcubically not harder} than a problem $B$ or there is a subcubic reduction from $A$ to $B$ if every algorithm that solves problem $B$ in subcubic time can be used as a black box to solve problem $A$ in subcubic time. We denote this reduction with $A \subless B$. Similarly, two problems $A$ and $B$ are \textit{subcubically equivalent} if both $A \subless B$ and $B \subless A$ hold. This relation is referred to by $A \subequal B$.

In the following, we define all of the problems in detail and explain the relation between them. We divide the problems into three different categories. The first category contains the problems in which the objective function is to measure a quantity of a given graph. In the second category, we define the vertex version of the same problems. Finally, in the third category, we define the complementary version of the problems based on their vertex version.
The definitions of the problems may seem repetitive, but as we show later in the paper, this does not necessarily mean the problems are equivalent.

\subsection{The First Category: Original Version}

The problems of this category are some of the well-studied cubic-time problems in their standard definition. In the following, we shortly bring a definition of each problem so that the reader has a reference to compare these problems with the problems of the next categories.

\begin{definition}
	Given a graph $G$, \APSP{} asks for an $n \times n$ matrix $D$ such that $D_{i,j}$ specifies the distance between the $j$'th vertex from $i$'th vertex of $G$.
\end{definition}
We also study another variant of the \APSP{} problem in which we are not required to compute the whole matrix of distances, but we only need to verify if a given matrix is the correct distance matrix of the graph.

\begin{definition}
	Given a graph $G$ and a matrix $D$, the objective of \APSPV{} is to determine whether $D$ is the correct distance matrix of $G$.
\end{definition}

One of the important problems that have been studied in the literature of subcubic equivalences is the \NEG{} problem. In this problem, the goal is to determine whether a given graph has a triangle with negative total weight. Although the solution of every instance of this problem is either YES or NO, it has been shown that this problem is as hard as \APSP{} with regard to having a subcubic algorithm \cite{focs}.

\begin{definition}
	Given a graph $G$, \NEG{} asks whether the graph has a triangle with negative total weight.
\end{definition}
We also study the \Median{}, \Radius{}, and \Diameter{} problems for weighted graphs with non-negative weights. All these problems have been vastly studied in the literature. Many algorithms have been proposed for each of these problems, but none of them has a subcubic runtime \cite{hakimi1964optimum,aingworth1999fast}. In a recent work of Abboud et al.~\cite{soda}, it has been shown that a subcubic algorithm for either of these problems leads to a subcubic algorithm for the \APSP{} problem. It is trivial to show that any subcubic algorithm for \APSP{} solves any of these problems in subcubic time.
\begin{definition}
	Given a graph $G$ with non-negative edge weights, the goal of \Radius{} is to find the smallest number $R^*$, such that there exists a $v\in G$ that can reach every other vertex within a distance of $R^*$.
\end{definition}
\begin{definition}
	Given a graph $G$ with non-negative edge weights, the goal of \Median{} is to find a vertex whose total sum of distances to all other vertices is minimum and report this total sum.
\end{definition}

\begin{definition}
	Given a graph with non-negative edge weights, \Diameter{} asks to compute the longest distance between any pair of vertices in $G$.	
\end{definition}

\subsection{The Second Category: Vertex Version}

In the second category, we introduce the vertex versions of the problems in the first category. In Lemma \ref{mosavian} we show equivalences between the original version and the vertex version for some of the problems\footnote{A similar idea can be used to prove the same claim for the rest of the problems.}.

\begin{definition}
	Given a graph $G$ and a matrix $D$, the goal of the \APSPVV{} problem is to either report that $D$ is the correct distance matrix of $G$ or return an index $(i,j)$ such that $D_{i,j}$ is \textbf{not} equal to the distance between the $j$'th vertex from the $i$'th vertex.
\end{definition}
\begin{definition}
	Given a graph $G$, the goal of the \NEGV{} problem is to either report the graph has no triangle with negative total weight or report a vertex which forms such a triangle with two other vertices.
\end{definition}

\begin{definition}
	Given a graph $G$ with non-negative edge weights, the goal of the \RadiusVertex{} problem is to find a vertex which has the minimum of maximum distance to all other vertices.
\end{definition}

Note that \RadiusVertex{} is equivalent to finding a center of $G$.

\begin{definition}\sloppy
	Given a graph $G$ with non-negative edge weights, the goal of the \MedianV{} problem is to find a vertex which has the minimum total sum of distances to all other vertices.
\end{definition}

\begin{definition}\sloppy
	Given a graph $G$ with non-negative edge weights, the goal of the \DiameterE{} problem is to find a vertex $u$ such that there exists a vertex $v$ that has a distance from $u$ equal to the diameter of the graph.
\end{definition}

The reason we define different versions of a problem is that this helps convey a better understanding of the idea behind our reductions. It is important to mention that these different definitions of a problem do not change its hardness under subcubic reductions. To prove this, we use binary search as the primary tool to solve one problem from another. In other words, it can be shown that each problem in the first category is equivalent to its corresponding problem of the second category.


\begin{lemma}\label{mosavian}
	Given a graph $G=(V, E)$, the following pairs of problems are equivalent under subcubic reduction.
	\begin{itemize}
		\item \Radius{} and \RadiusVertex{} $\mathsf{(}$\Center{}$\mathsf{)}$.
		\item \Median{} and \MedianV{}.
		\item \Diameter{} and \DiameterE{}.
	\end{itemize}
\end{lemma}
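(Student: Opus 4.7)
The plan is to establish each of the three pair equivalences by giving subcubic reductions in both directions. One direction, vertex version to original version, is almost immediate: given a vertex $v^*$ returned by the vertex oracle, I run a single single-source shortest-paths computation (Dijkstra in $\otilda(n^2)$ time) from $v^*$ and read off the desired value---the maximum distance for \Radius{}/\Diameter{}, or the sum of distances for \Median{}. This post-processing is clearly subcubic.

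For the harder direction (original version to vertex version), I plan to use $O(\log n)$ calls to the value oracle via binary search on the vertex index. Enumerate $V = \{v_1, \dots, v_n\}$. Since the predicate ``some $v_i$ with $i \le k$ attains the optimum'' is monotone in $k$, a standard binary search finds the smallest such $k$ and isolates a specific optimum-attaining vertex. Each query corresponds to building an auxiliary graph $G^k$ of $O(n)$ vertices and invoking the value oracle on it; the total cost is $O(\log n)$ oracle calls on $O(n)$-size inputs plus polynomial bookkeeping, which is subcubic whenever the oracle is.

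For the \Median{} pair I would construct $G^k$ by adding a single new vertex $y$ together with zero-weight edges $(v_i, y)$ for every $i \le k$, and no outgoing edges from $y$. Because $y$ has no outgoing edges, distances among the original $V$ are unchanged, while $d_{G^k}(v_i, y) = 0$ for $i \le k$ and $d_{G^k}(v_i, y) = \min_{j \le k} d_G(v_i, v_j)$ for $i > k$. Hence the total-distance $S_{G^k}(v_i)$ equals $S_G(v_i)$ when $i \le k$ and equals $S_G(v_i) + \min_{j \le k} d_G(v_i, v_j)$ when $i > k$. A short triangle-inequality argument---showing that the degenerate case in which some $i > k$ has a zero-distance link to the prefix forces a prefix vertex to also be a median---then gives $\Median{}(G^k) = T^*$ iff the prefix $\{v_1, \dots, v_k\}$ contains a median of $G$, so each binary-search step reduces to one \Median{} oracle call.

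The hard part will be implementing an analogous construction for the \Radius{} and \Diameter{} pairs, which are max-based rather than sum-based: attaching an auxiliary vertex with naive weights tends to perturb the eccentricities of every vertex that can reach it, not just the non-candidates I want to disqualify. For the \Radius{} pair I would sidestep the difficulty by chaining through the already-proven subcubic equivalence $\APSP{} \subequal \Radius{}$ of~\cite{soda}: given a \Radius{} oracle, their reduction yields \APSP{} subcubically, after which a center is read from the distance matrix in $O(n^2)$ time. For the \Diameter{} pair no such chain is currently available, so a direct construction is needed; my plan is to attach an auxiliary vertex with weights tuned around $D^*$ so that the eccentricities of prefix and non-prefix vertices are pushed to opposite sides of $D^*$, and then to mimic the \Median{} case analysis to verify that the single \Diameter{} query on $G^k$ answers the binary-search question.
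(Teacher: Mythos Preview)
Your binary-search framework with $O(\log n)$ value-oracle calls and the easy direction via a single SSSP both match the paper. But two of your three constructions have real gaps.

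For \Median{}, the triangle-inequality patch in the degenerate case is wrong. If $i>k$ is a median and $d_G(v_i,v_j)=0$ for some $j\le k$, the triangle inequality only yields $d_G(v_i,u)\le d_G(v_j,u)$ for every $u$, hence $S_G(v_i)\le S_G(v_j)$; this does \emph{not} force $v_j$ to be a median. A three-vertex directed example with one zero-weight edge already gives $\Median(G^k)=T^*$ while the prefix contains no median. The paper's construction sidesteps this entirely: add a single dummy $x$ reached from every vertex of $S$ at cost $Q$ and from every vertex of $V\setminus S$ at cost $Q+1$ (with $Q$ large); then the new median value is $m+Q$ iff $S$ contains a median, with no zero-distance degeneracy.

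For \Diameter{} you give only a plan, and attaching \emph{one} auxiliary vertex with weights ``tuned around $D^*$'' is the wrong shape: the diameter has two endpoints, and a single sink cannot raise the eccentricity of every prefix vertex while leaving non-prefix eccentricities pinned at $D^*$. The paper's construction is different in kind: multiply every edge weight by $3$, then attach a \emph{private} dummy $x'$ to each $x\in S$ via weight-$1$ edges in both directions; the new diameter is $3d$ if no vertex of $S$ is a diameter endpoint and $3d{+}1$ or $3d{+}2$ otherwise, which is exactly the test the binary search needs. (Your detour through $\APSP{}\subequal\Radius{}$ for the \Radius{} pair is logically valid, though the paper handles that case directly too, by adding one dummy reachable only from $S$ at cost $r$.)
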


The proof of this lemma is in Appendix \ref{trivialapp}.
\subsection{The Third Category: Complementary Version}

The problems of this category are defined in the same way as the problems of the second category; however, the objective here is exactly the opposite. For instance, in \DiameterE{} the goal is to find an endpoint of a diameter, where the goal of \CODiameter{} is to find a vertex that is not an endpoint of a diameter. Although the definitions of two problems seem very similar, we point out a wide gap between them. This is interesting since as shown in this paper, some of the other similar problems such as \Median{} and \Radius{}, are equivalent to their complementary versions. As a contribution of this paper, we simplify the gap between \Diameter{} and \APSP{} to the gap between \Diameter{} and \CODiameter{}.


\begin{definition}
	Given a graph $G$ and a matrix $D$, the goal of \COAPSPV{} is to either report that none of the entries of $D$ is correct or report a pair $(i,j)$ such that $D_{i,j}$ is equal to the distance between vertex $j$ from vertex $i$ of $G$.
\end{definition}

\begin{definition}
	Given a graph $G$, the goal of \CONEG{}  is to either report that every vertex of $G$ is in negative triangle or return a vertex which is \textbf{not} in any negative triangle.
\end{definition}

\begin{definition}
	Given a graph $G$, the goal of \CoRadius{} is to report a vertex which is \textbf{not} a solution to the \RadiusVertex{} problem for the same input, if exists one. Otherwise, reports that every vertex is a solution to \Radius{}, i.e. all vertices are centers of $G$.
\end{definition}

\begin{definition}
	Given a graph $G$, the goal of \COMedian{} is to report a vertex which is \textbf{not} a solution to \Median{}, if exists one.
\end{definition}

\begin{definition}
	Given a graph $G$, the goal of \CODiameter{} is to report a vertex which is \textbf{not} a solution to \DiameterE{}, if exists one.
\end{definition}

\section{Reductions}
In this section, we explain our reductions in detail. In Section \ref{sec1} we provide a subcubic reduction from \Radius{} to \CoRadius{} and \CODiameter{}. In Section \ref{sec2} we show a subcubic reduction from \NEG{} to \COMedian{}. Next, in Sections \ref{sec3} and \ref{sec4} we demonstrate subcubic reductions from \Diameter{} to \COAPSPV{} and from \CONEG{} to \Diameter{}, respectively. Finally, in Section \ref{sec5}, we reduce \CONEG{} to \NEG{}.
\subsection{\Radius{} to \CORadius{} and \Radius{} to \CODiameter{}}\label{sec1}
The main idea behind our proof is constructing a new graph instance and provide a subcubic reduction via a binary search.
\begin{lemma}\label{radToCorad}
	Given an $\otilda(T(n))$ time algorithm for \CORadius{}, where $T(n)$ is polynomial in $n$, there exists an $\otilda(T(n)+n^2)$ time algorithm for \Radius{}.
\end{lemma}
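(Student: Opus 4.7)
The plan is to use \CORadius{} as a decision oracle inside a binary search on the value of $R^*$. For each candidate integer $R$, I would build a small augmentation $G_R$ of the input graph so that \CORadius{}$(G_R)$ exhibits a clean dichotomy depending on whether $R\le R^*$ or $R>R^*$.

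The gadget I have in mind adds a single new vertex $s$ to $G$, together with a directed edge $s\to v$ of weight $R$ and a directed edge $v\to s$ of weight $0$ for every $v\in V(G)$. A routine case analysis of shortest paths then gives $d_{G_R}(s,v)=R$, $d_{G_R}(v,s)=0$, and $d_{G_R}(v,u)=\min(d_G(v,u),R)$ for $u,v\in V(G)$, which yields $\mathrm{ecc}_{G_R}(s)=R$ and $\mathrm{ecc}_{G_R}(v)=\min(\mathrm{ecc}_G(v),R)$ for every $v\in V(G)$.

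The key step is exploiting the resulting dichotomy. When $R\le R^*$, every original vertex satisfies $\mathrm{ecc}_G(v)\ge R^*\ge R$, so in $G_R$ every vertex has eccentricity exactly $R$; hence all vertices are centers and \CORadius{}$(G_R)$ is forced to report ``all centers''. When $R>R^*$, the radius of $G_R$ equals $R^*<R=\mathrm{ecc}_{G_R}(s)$, so $s$ is not a center and \CORadius{}$(G_R)$ returns some vertex outside the center set. Binary searching over $R\in\{0,1,\dots,nM\}$ then pins down $R^*$ as the largest $R$ for which ``all centers'' is returned, using only $O(\log(nM))=\otilda(1)$ oracle calls. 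A preliminary reachability check handles the degenerate case $R^*=\infty$.

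For the runtime, each $G_R$ has $n+1$ vertices and differs from $G$ by only $2n$ new edges, so every oracle call costs $T(n+1)=O(T(n))$. Reading and storing $G$ takes $O(n^2)$ up front, after which each gadget modification takes $O(n)$. Summing gives the claimed $\otilda(T(n)+n^2)$ bound. The main technical hurdle is getting the gadget right so that the dichotomy is exact: one must simultaneously cap every original eccentricity at $R$ (achieved by composing the weight-$0$ edges $v\to s$ with the weight-$R$ edges $s\to v$) and peg $\mathrm{ecc}_{G_R}(s)$ to exactly $R$, since it is these two matching values that force the boundary case $R=R^*$ to land on the ``all centers'' side and keep the binary search's invariant consistent.
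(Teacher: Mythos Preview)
Your proof is correct and follows essentially the same strategy as the paper: augment $G$ with a small gadget so that \CORadius{} on the augmented graph becomes a threshold test for the radius, then binary search. The paper's gadget differs only cosmetically---it adds \emph{two} extra vertices $x,y$ joined to every original vertex by bidirectional edges of weight $q$ (forcing $\mathrm{ecc}(x)=\mathrm{ecc}(y)=2q$ via the $x$--$y$ distance), and it first doubles all edge weights so that the binary search on $q$ lands on an integer; your single-vertex gadget with asymmetric weights $R$ and $0$ achieves the same cap-and-peg effect more directly and avoids the doubling step.
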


\begin{proof}
	First, without loss of generality, we assume every edge in $G$ has an even weight since otherwise, we can double the weight of each edge. Let $\mathcal{A}$ be an $\otilda(T(n))$ time algorithm for \CORadius{}. Given graph $G$, we construct a graph $G'$ as follows. Put all vertices and edges of $G$ in $G'$, plus two new vertices $x$ and $y$. For each vertex $v \in V(G')\setminus\{x, y\}$ add four edges from $v$ to $x$ and $y$, and from $x$ and $y$ to $v$ each with weight $q$. 
	Now we claim that the radius of $G$ is less than $2q$ if and only if there is a vertex in $G'$ which is not a center. For simplicity, we call such a vertex a \COCenterVertex{}.
	
	Given the claimed proposition we can use algorithm $\mathcal{A}$ to determine whether there exists a \COCenterVertex{} in $G'$, in time $\otilda(T(n))$. Hence, a binary search on $q$ can find the minimum value of $q$ such that the radius of $G$ is no less than $2q$, \textit{i.e.}, every vertex in $G'$ is a center. The number of times we need to use $\mathcal{A}$ is $O(\log nM) \in \otilda(1)$; therefore, there exists an $\otilda(T(n))$ time algorithm for \Radius{}.
	
	To prove the claim, we first show that if there exists a \COCenterVertex{} in $G'$, then the radius of $G$ is less than $2q$. Let $u$ be such a vertex. Note that, the distance between $u$ and any other vertex is at most $2q$ by the construction. Since $u$ is not a center, there exists a  center $v \in V(G')$ such that the distance between $v$ and any other vertex is less than $2q$. Thus, the radius of $G$ is less than $2q$.
	
	Similarly, if the radius of $G$ is less than $2q$, then vertex $x$ is a \COCenterVertex{} in $G'$ since the shortest path between $x$ and $y$ is of length $2q$.
	Therefore, there exists an $\otilda(T(n)+n^2)$ time algorithm for \Radius{}.
\end{proof}

Interestingly, in proof of Lemma \ref{radToCorad}, we do not need to know which vertex is a \COCenterVertex{}. More precisely, it is only sufficient to know whether all vertices of the graph are centers or not. Via the following observation, we conclude the same proof can be used to reduce \Radius{} to \CODiameter{}.
\begin{observation}\label{obvious}
Every graph $G$ has a vertex $u$ which is a \COCenterVertex{} if and only if it has a vertex $v$ which is not a diameter endpoint of the graph.
\end{observation}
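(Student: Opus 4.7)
The plan is to prove the contrapositive: every vertex of $G$ is a center if and only if every vertex of $G$ is a diameter endpoint. The key fact I would invoke is that the eccentricity $\mathrm{ecc}(v) := \max_{u\in V(G)} d(v,u)$ satisfies $R = \min_v \mathrm{ecc}(v)$ and $D = \max_v \mathrm{ecc}(v)$, where $R$ is the radius and $D$ is the diameter of $G$, and in particular $R \le D$.

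First I would establish the forward direction of the contrapositive. Assume every vertex of $G$ is a center, i.e., $\mathrm{ecc}(v) = R$ for all $v \in V(G)$. Then $D = \max_v \mathrm{ecc}(v) = R$, so $\mathrm{ecc}(v) = D$ for every $v$. By definition of \DiameterE{}, this says that for every $v$ there is some $u$ with $d(v,u) = D$, so every vertex is a diameter endpoint.

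Next I would prove the reverse direction, which is entirely symmetric. Assume every vertex is a diameter endpoint, i.e., $\mathrm{ecc}(v) \ge D$ for every $v$. Combined with the trivial bound $\mathrm{ecc}(v) \le D$, this forces $\mathrm{ecc}(v) = D$ for all $v$, hence $R = \min_v \mathrm{ecc}(v) = D$. Then $\mathrm{ecc}(v) = R$ for every $v$, so every vertex attains the minimum eccentricity and is therefore a center.

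Taking the contrapositive of the equivalence just established gives exactly the statement of the observation: there exists a non-center if and only if there exists a non-endpoint of a diameter. There is no real obstacle here — the observation is essentially a restatement of the fact that the conditions ``all eccentricities equal $R$'' and ``all eccentricities equal $D$'' coincide, both being equivalent to all eccentricities being equal (forcing $R = D$). The only thing to be a little careful about is to state clearly that the vertices witnessing the two sides of the equivalence need not be the same; the statement is purely existential, and is proved via the uniform eccentricity condition on all of $V(G)$.
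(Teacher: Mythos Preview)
Your proof is correct and is precisely the argument the paper has in mind; the paper states this as an observation without proof, and your contrapositive via the eccentricity function (noting that ``all eccentricities equal $R$'' and ``all eccentricities equal $D$'' both amount to all eccentricities being equal) is exactly the intended justification.
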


In other words, deciding whether the radius and the diameter of $G$ are equal is equivalent to \APSP{} under subcubic reduction.

\begin{corollary}\label{corolla}
	Given an $\otilda(T(n))$ time algorithm for \CODiameter{}, there exists an $\otilda(T(n)+n^2)$ time algorithm for \Radius{}.
\end{corollary}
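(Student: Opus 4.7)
My plan is to recycle the construction from the proof of Lemma \ref{radToCorad} almost verbatim, and route the decision query through \CODiameter{} instead of \CORadius{} using Observation \ref{obvious}.

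Concretely, given an instance $G$ of \Radius{} and a threshold $q$, I would double all edge weights (so every edge weight is even) and build the auxiliary graph $G'$ by adding two fresh vertices $x$ and $y$ connected by edges of weight $q$ to and from every other vertex. The proof of Lemma \ref{radToCorad} already establishes the central equivalence: the radius of $G$ is less than $2q$ if and only if $G'$ contains a \COCenterVertex{}. To plug in a \CODiameter{} oracle, I would next invoke Observation \ref{obvious}, which says that $G'$ has a non-center vertex if and only if $G'$ has a vertex that is not an endpoint of a diameter. Chaining these two equivalences, the radius of $G$ is less than $2q$ if and only if \CODiameter{} returns some vertex of $G'$ (as opposed to reporting that every vertex is a diameter endpoint).

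Given this decision procedure, I would finish by performing a binary search on $q$ in the range $[0, nM]$ to locate the smallest $q$ for which the radius of $G$ is at least $2q$; the radius of the original (pre-doubled) graph is then read off directly. The search uses $O(\log(nM)) = \otilda(1)$ oracle calls, and each call runs in $\otilda(T(n))$ time plus the $O(n^2)$ needed to materialize $G'$, so the total running time is $\otilda(T(n)+n^2)$, matching the claim.

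I do not anticipate any genuine obstacle here, since both ingredients are already in hand: the reduction machinery is identical to that of Lemma \ref{radToCorad}, and Observation \ref{obvious} is a purely combinatorial statement about eccentricities (if every vertex were a center, all eccentricities would equal the radius, forcing the diameter to equal the radius and hence making every vertex a diameter endpoint; the converse is symmetric). The only minor care point is verifying that the \CODiameter{} oracle's output interface---returning either a witness non-endpoint vertex or a declaration that all vertices are endpoints---suffices to answer the decision question we actually need, which it does since we only consume the yes/no bit of its answer.
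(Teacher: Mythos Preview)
Your proposal is correct and mirrors the paper's own argument: the paper explicitly states that, via Observation~\ref{obvious}, the proof of Lemma~\ref{radToCorad} can be reused verbatim with a \CODiameter{} oracle in place of a \CORadius{} oracle, which is exactly what you do.
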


The following theorems follow directly from Lemma \ref{radToCorad} and Corollary \ref{corolla}.
\begin{theorem}\label{thms1}
	$\Radius \subless \CoRadius{}$.
\end{theorem}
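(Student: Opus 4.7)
The plan is to derive Theorem \ref{thms1} as an immediate consequence of Lemma \ref{radToCorad}, which has already been proved in the excerpt. The reduction notation $\Radius \subless \CoRadius$ simply means: any subcubic algorithm for \CoRadius{} yields a subcubic algorithm for \Radius{}. Since Lemma \ref{radToCorad} asserts that an $\otilda(T(n))$-time algorithm for \CoRadius{} produces an $\otilda(T(n) + n^2)$-time algorithm for \Radius{}, the theorem follows by unwinding definitions.

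First, I would recall the subcubic reduction framework: a subcubic algorithm is one running in time $O(n^{3-\epsilon})$ for some constant $\epsilon > 0$. Suppose we have such an algorithm for \CoRadius{}, say in time $\otilda(T(n))$ with $T(n) = O(n^{3-\epsilon})$. Then by Lemma \ref{radToCorad}, \Radius{} admits an algorithm running in time $\otilda(T(n) + n^2) = \otilda(n^{3-\epsilon} + n^2) = \otilda(n^{3-\epsilon'})$ for $\epsilon' = \min(\epsilon, 1) > 0$. Hence \Radius{} is also solvable in truly subcubic time, which is exactly the content of $\Radius \subless \CoRadius$.

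Since the substantive work is already contained in Lemma \ref{radToCorad}, the only thing the proof of Theorem \ref{thms1} needs to do is observe this chain of implications. There is no technical obstacle here; the nontrivial content (constructing the auxiliary graph $G'$ with the two extra vertices $x, y$ connected to all other vertices by edges of weight $q$, and binary-searching over $q$) was already handled in the proof of Lemma \ref{radToCorad}. I would therefore write the proof as a one-line deduction invoking the lemma, noting that the additive $n^2$ overhead cannot destroy subcubicity.
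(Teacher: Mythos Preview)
Your proposal is correct and matches the paper's approach exactly: the paper states that Theorem \ref{thms1} ``follows directly from Lemma \ref{radToCorad},'' and your argument just makes explicit why an $\otilda(T(n)+n^2)$ overhead preserves subcubicity.
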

\begin{theorem}\label{thms2}
	$\Radius \subless \CODiameter{}$.
\end{theorem}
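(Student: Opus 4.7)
The plan is to derive this theorem immediately from Corollary \ref{corolla}, which has already been established via Lemma \ref{radToCorad} combined with Observation \ref{obvious}. All of the substantive work (the gadget construction adding the two auxiliary vertices $x,y$ connected to every original vertex with weight $q$, and the binary search over $q$) has already been carried out for the \CoRadius{} reduction and then transferred to \CODiameter{} by noting that $G'$ contains a \COCenterVertex{} if and only if it contains a vertex that is not a diameter endpoint. So no new construction is needed; only a translation from the algorithmic statement of the corollary into the $\subless$ notation.

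Concretely, I would argue as follows. Suppose \CODiameter{} admits a truly subcubic algorithm, i.e.\ an algorithm with running time $\otilda(n^{3-\epsilon})$ for some constant $\epsilon>0$. Setting $T(n) = n^{3-\epsilon}$ in Corollary \ref{corolla} yields an algorithm for \Radius{} running in time $\otilda(n^{3-\epsilon} + n^2) = \otilda(n^{3-\epsilon})$, which is also truly subcubic. By the definition of subcubic reduction given in Section 3 (an algorithm for $B$ in subcubic time yields one for $A$ in subcubic time), this is exactly the statement $\Radius \subless \CODiameter{}$.

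Since the corollary itself is essentially the conclusion we need, there is no real obstacle to overcome here; the only ``step'' is the standard observation that the additive $n^2$ term in the reduction's overhead is dominated by any truly subcubic bound for the oracle, so subcubicity is preserved under the reduction. This is a one-line argument, and the theorem should be stated as an immediate consequence of Corollary \ref{corolla}.
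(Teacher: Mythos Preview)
Your proposal is correct and matches the paper's own treatment: the paper states that Theorem~\ref{thms2} follows directly from Corollary~\ref{corolla} (together with Lemma~\ref{radToCorad}), with no additional argument given. Your unpacking of why the $n^2$ overhead preserves subcubicity is exactly the intended reading.
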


Note that due to Abboud et al.~\cite{soda} \APSP{} and \Radius{} are equivalent under subcubic reduction. Thus, by Theorems \ref{thms1} and \ref{thms2}, \CORadius{} and \CODiameter{} are also equivalent to \APSP{} under subcubic reductions.

\begin{corollary}\label{apspcoradcodiam}
	$\APSP{} \subequal \CoRadius{} \subequal \CODiameter{}$.
\end{corollary}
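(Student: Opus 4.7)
The plan is to chain together three ingredients: the known equivalence $\APSP \subequal \Radius$ due to Abboud, Grandoni, and Vassilevska Williams, the reductions $\Radius \subless \CoRadius$ and $\Radius \subless \CODiameter$ just established in Theorems \ref{thms1} and \ref{thms2}, and the trivial reductions $\CoRadius \subless \APSP$ and $\CODiameter \subless \APSP$. Transitivity of $\subless$ then delivers both claimed equivalences.

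Concretely, for the forward directions, I would first invoke Abboud et al.\ to get $\APSP \subless \Radius$, then compose with Theorem \ref{thms1} to conclude $\APSP \subless \CoRadius$; in the same way, composing $\APSP \subless \Radius$ with Theorem \ref{thms2} yields $\APSP \subless \CODiameter$. For the reverse directions, I would argue that both are immediate from having access to the full distance matrix: an $\otilda(n^{3-\epsilon})$ algorithm for \APSP{} lets us compute in $O(n^2)$ additional time the eccentricity of every vertex, hence the radius $R^*$ and the diameter $D^*$, and then scan the vertex set for any vertex whose eccentricity strictly exceeds $R^*$ (for \CoRadius{}) or is strictly less than $D^*$ (for \CODiameter{}); if no such vertex exists we report accordingly. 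This completes both $\CoRadius \subless \APSP$ and $\CODiameter \subless \APSP$.

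Combining the two directions gives $\APSP \subequal \CoRadius$ and $\APSP \subequal \CODiameter$, and since $\subequal$ is transitive, $\CoRadius \subequal \CODiameter$ as well, which is exactly the corollary. There is essentially no obstacle here: the nontrivial work has been packaged inside Lemma \ref{radToCorad}, Corollary \ref{corolla}, Observation \ref{obvious}, and the external equivalence $\APSP \subequal \Radius$ from \cite{soda}. The only thing that needs care is a sentence confirming that the compositional blow-up in the reductions is still subcubic, which follows because $T(n)+n^2$ remains $\otilda(n^{3-\delta'})$ whenever $T(n) = \otilda(n^{3-\delta})$.
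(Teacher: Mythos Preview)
Your proposal is correct and matches the paper's own justification essentially line for line: the paper combines $\APSP \subequal \Radius$ from \cite{soda} with Theorems~\ref{thms1} and~\ref{thms2}, relying on the trivial reductions of \CoRadius{} and \CODiameter{} back to \APSP{} for the other direction. Your only addition is spelling out those trivial reverse reductions explicitly, which is harmless.
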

\subsection{\NEG{} to \COMedian{}}\label{sec2}
In this section, we provide a subcubic reduction from \NEG{} to \COMedian{}. The reduction uses a tricky graph construction to create a symmetric instance graph that helps to make a connection from \NEG{}, which is subcubically equivalent to \APSP{}, to \COMedian{}.
\begin{lemma}\label{lemlemeto}
	Given an $\otilda(T(n))$ time algorithm for \COMedian{}, where $T(n)$ is polynomial in $n$, there exists an $\otilda(T(n)+n^2)$ time algorithm for \NEG{}.
\end{lemma}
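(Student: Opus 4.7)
I would reduce \NEG{} to a single call to \COMedian{} on a graph $G'$ of size $O(n)$ with non-negative integer weights, designed so that \emph{every vertex of $G'$ has the same total distance sum if and only if $G$ contains no negative triangle}. Given such a $G'$, the black box returns ``no co-median'' (all vertices are medians) precisely in the NO-instance of \NEG{}, and returns some vertex otherwise, so a single call decides \NEG{}. Since $|V(G')| = O(n)$ and building $G'$ plus reading the output is $O(n^2)$, the total running time is $\otilda(T(n) + n^2)$.

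\textbf{Construction.} The natural candidate is the tripartite shift from the Vassilevska Williams--Williams framework. Let $A = \{a_i\}$, $B = \{b_j\}$, $C = \{c_k\}$ be three copies of $V(G)$ and fix a constant $W > M$ so that all shifted weights are non-negative. Assign $w'(a_i, b_j) = W + w(i,j)$, $w'(b_j, c_k) = W + w(j,k)$, and place a direct edge $w'(a_i, c_k) = 2W - w(k,i)$. The two-hop path $a_i \to b_j \to c_k$ has length $2W + w(i,j) + w(j,k)$, which is strictly shorter than the direct edge exactly when $(i,j,k)$ is a negative triangle in $G$. To make distance sums coincide in the absence of negative triangles, I would add a symmetric ``backbone'' of auxiliary vertices and edges so that the graph becomes distance-regular across parts, and in the NO-case every vertex has the same total sum $S^*$ depending only on $n$ and $W$. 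When a negative triangle exists, at least one $a_i$-$c_k$ distance drops strictly below $2W - w(k,i)$, which strictly reduces $S(a_i)$ and $S(c_k)$ below $S^*$ and forces the untouched vertices to be non-medians.

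\textbf{Main obstacle.} The hard part will be the symmetrisation step: proving that in the NO-instance every vertex's distance sum equals the same $S^*$, irrespective of the particular weights $w(i,j)$. A naive tripartite construction has $S(a_i)$ depending on $\sum_j w(i,j)$, which already breaks symmetry even without a negative triangle. The remedy I have in mind is to pair each shifted edge of weight $W + w(i,j)$ with a mirror edge of weight $W - w(i,j)$ on auxiliary copies of the parts (or, equivalently, to glue $G'$ to a copy of itself with negated weights) so that every $w$-dependent term cancels in a telescoping fashion when summed over all destinations. Verifying this cancellation carefully---i.e.\ that all $\Theta(n^2)$ pairwise distances do collapse to one value $S^*$ in the NO-instance, that no unintended shortcut appears for large $W$, and that any negative triangle genuinely perturbs at least one distance strictly---is the main technical content of the proof, and establishes the required equivalence with just a single query to \COMedian{}.
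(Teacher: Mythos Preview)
Your high-level plan---a tripartite blow-up with $(+w)$/$( -w)$ cancellation so that in the NO-instance every vertex has the same distance sum---is exactly the paper's approach. The paper realises the cancellation more tightly than your ``auxiliary mirrored copies'' idea: it makes the three parts $A,B,C$ cyclically symmetric, with forward edges $v_X\to u_{X+1}$ of weight $H+w_{v,u}$ and \emph{backward} edges $u_{X+1}\to v_X$ of weight $2H-w_{v,u}$. Thus the direct edge $r_A\to v_C$ already has weight $2H-w_{v,r}$, so the very same edge plays both roles you separate: it is the ``direct $A\to C$ edge to be undercut by a two-hop path'' \emph{and} it supplies the $-w$ term that cancels the $+w$ coming from the $r_A\to B$ edges when you sum. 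No extra mirrored copy is needed.

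There is one real gap in your plan. You assert that $G'$ has all distance sums equal \emph{if and only if} $G$ has no negative triangle, and then conclude that a single \COMedian{} call decides \NEG{}. The ``if'' direction is fine, but the ``only if'' is not: when \emph{every} vertex of $G$ lies in a negative triangle, every $a_i$'s sum drops below $S^*$, and nothing prevents all of them from dropping by the same amount (take, e.g., a vertex-transitive $G$). Your sentence ``forces the untouched vertices to be non-medians'' silently assumes an untouched vertex exists. The paper handles this with one extra step: if \COMedian{} reports that all vertices are medians, run a single SSSP from any vertex, compute its sum, and compare it to the known closed-form value $S^* = (n-1)H/2 + 3nH$. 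Equality means no negative triangle; a strictly smaller value means there is one. This costs only $O(n^2)$ and keeps you within the stated $\otilda(T(n)+n^2)$ bound, so the fix is cheap---but without it your argument is incomplete.
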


\begin{proof}
	Given a graph $G(V,E,w)$, we construct a graph $G'(V',E',w')$ with $3$ times as many vertices as $G$. The approach to solving the problem is to see whether $G'$ has a vertex which is not median (\COMedianVertex{}). If not, we show finding out whether $G$ has a negative triangle, given the fact that $G'$ has no negative edge, can be done merely by running \textit{Dijkstra's} algorithm from an arbitrary vertex of $G'$.
	
	Without loss of generality, we assume there exists an edge between every two vertices of $G$; otherwise, we put an edge with a large enough even weight $H$ and be sure that it does not contribute to any negative triangle. The vertex set of $G'$ contains three copies of $V(G)$, namely $A$, $B$ and $C$. Let $v_X$ denote a copy of a vertex $v\in V(G)$ in part $X\in\{A,B,C\}$ of $G'$. We draw an edge of weight $H/2$ from every $v_X$ to every other $u_X$ in the same part. Moreover, for every two vertices $v_A$ and $u_B$ we draw an edge of weight $H+w_{v,u}$ from $v_A$ to $u_B$ and an edge of weight $2H-w_{v,u}$ from $u_B$ to $v_A$. Moreover, we assume $w_{v,v}=H$. We do the same for edges between parts $B$ and $C$ and parts $C$ and $A$. Figure \ref{fig:Gp} shows graph $G'$ and the symmetry between its three parts.
	
	\begin{figure}[h]{}
		\centering
		\includegraphics[scale=0.36]{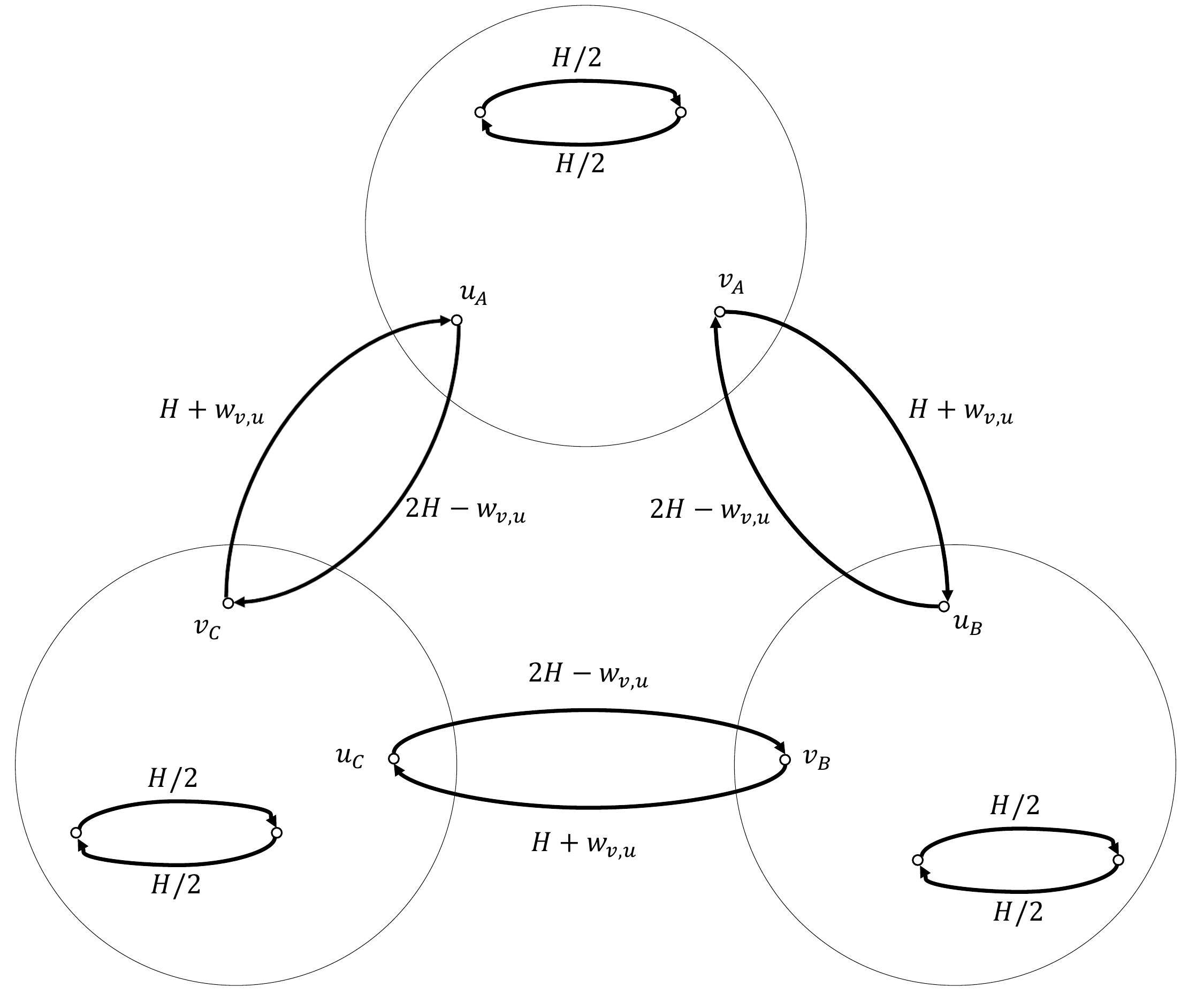}
		\caption{Constructing a symmetric graph $G'$ from $G$ in the reduction from \NEG{} to \COMedian.}
		\label{fig:Gp}
	\end{figure}
	
	Since $G'$ is symmetric, we can assume that it has a median in every part. Let $r_A$ be a median in part $A$. The shortest path from $r_A$ to $v_A$ is a direct edge of weight $H/2$. The shortest path from $r_A$ to every $v_B$ is also a direct edge with weight $H+w_{r,v}$. For every $v_C$, the shortest path from $r_A$ is either a direct edge of weight $2H-w_{r,v}$ or a path through an intermediate vertex $u_B$ with total length of $H+w_{r,u}+H+w_{u,v}$. If the latter is smaller than the former, we can imply that $r$, $u$ and $v$ form a negative triangle in $G$: $$H+w_{r,u}+H+w_{u,v}<2H-w_{r,v} \Rightarrow w_{r,u}+w_{u,v}+w_{v,r}<0\,.$$
	
	On the other hand, if the shortest path from $r_A$ to every vertex $v_C$ is the direct edge $(r_A,v_C)$ then using a similar inequality, it can be shown that $r_A$ does not contribute to any negative triangle in $G$. In this case, all vertices of $G'$ have a fixed summation of distances from all other vertices. Let $sum(v)$ denote such a summation for a vertex $v$. Below, we formulate this value only for vertices in part $A$, because based on the symmetricity of $G'$, the value of $sum(v)$ can be determined via the same formulas for the vertices in parts $B$ and $C$.
	
	
	\begin{align}
	\forall v_A\in G': sum(v_A) &= \sum_{x_A\in A\backslash\{v_A\}}\big(H/2\big)+\sum_{x_B\in B}\big(H+w_{v,x}\big)+\sum_{x_C\in C}\big(2H-w_{v,x}\big) \nonumber \\ 
	&=(n-1)H/2+3nH \label{gg1}
	\end{align}
	
	According to Equation \ref{gg1}, we only need to construct $G'$ as above and see if all vertices are medians, and if $sum(v)=(n-1)H/2+3nH$ for every $v\in V(G')$\footnote{It suffices to check this value just for one vertex because now we know all vertices are median.}. If these two hold, then $G$ is free of negative triangles. Otherwise, there exists a median $r_A$ in $G'$ with $sum(r_A)<(n-1)H/2+3nH$ indicating the existence of a negative triangle in $G$.
\end{proof}

The following theorem follows directly from Lemma \ref{lemlemeto}.
\begin{theorem}
	$\NEG{} \subless \COMedian{}$.
\end{theorem}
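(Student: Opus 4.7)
The plan is to derive the theorem as an immediate corollary of Lemma \ref{lemlemeto}. Recall from Section 3 that $A \subless B$ means every truly subcubic algorithm for $B$, i.e.\ one running in time $O(N^{3-\epsilon})$ on an $N$-vertex input for some constant $\epsilon>0$, yields a truly subcubic algorithm for $A$. Since Lemma \ref{lemlemeto} has already transformed an $\otilda(T(n))$-time algorithm for \COMedian{} into an $\otilda(T(n)+n^2)$-time algorithm for \NEG{}, the only remaining task is to verify that plugging in a subcubic choice of $T$ keeps the output within the truly subcubic regime.

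To carry this out, I will suppose that \COMedian{} admits an $\otilda(N^{3-\epsilon})$ algorithm for some constant $\epsilon>0$. The graph $G'$ produced by the reduction of Lemma \ref{lemlemeto} has $N=3n$ vertices, so applying the hypothesized algorithm to $G'$ uses $\otilda((3n)^{3-\epsilon})=\otilda(n^{3-\epsilon})$ time. Adding the $\otilda(n^2)$ overhead for constructing $G'$, running a single-source Dijkstra from the returned median $r_A$, and comparing the resulting $sum(r_A)$ against $(n-1)H/2+3nH$ from Equation \ref{gg1}, the overall time to decide \NEG{} becomes $\otilda(n^{3-\epsilon}+n^2)=\otilda(n^{3-\epsilon})$, which is still truly subcubic.

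The main (and only) thing to be careful about is the constant-factor blow-up in vertex count: because $G'$ has $3n$ rather than $n$ vertices, one must check that this inflation does not push an $n^{3-\epsilon}$ bound out of the subcubic regime. Since $(3n)^{3-\epsilon}=3^{3-\epsilon}\cdot n^{3-\epsilon}$ and $3^{3-\epsilon}$ is a constant independent of $n$, the blow-up is absorbed into the $\otilda$ notation, and the additive $\otilda(n^2)$ is trivially dominated. By the definition of $\subless$, the reduction $\NEG{} \subless \COMedian{}$ therefore follows immediately from Lemma \ref{lemlemeto}, completing the proof.
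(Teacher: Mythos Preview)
Your proposal is correct and matches the paper's own argument: the theorem is stated immediately after Lemma~\ref{lemlemeto} with the remark that it ``follows directly'' from that lemma, and your explicit check that the $3n$-vertex blow-up and the $\otilda(n^2)$ overhead preserve truly subcubic time is exactly the routine verification implicit in that remark.
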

\subsection{\Diameter{} to \COAPSPV{}}\label{sec3}
In this section, we provide a subcubic reduction from \Diameter{} to the \COAPSPV{}.
\begin{lemma}\label{lemlemlemeto}
	Given an $\otilda(T(n))$ time algorithm for \COAPSPV{}, where $T(n)$ is polynomial in $n$, there exists an $\otilda(T(n)+n^2)$ time algorithm for \Diameter{}.
\end{lemma}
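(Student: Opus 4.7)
The plan is a standard binary search on the diameter value, combined with a graph transformation that turns the query ``is the diameter at least $d$?'' into a question that \COAPSPV{} can answer. Without loss of generality, I first double every edge weight of $G$, so that all shortest-path distances are even; this lets me freely use $d/2$ as an edge weight when $d$ is even, and increases runtime only by a constant factor.

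For a candidate value $d$, I would construct an auxiliary graph $G'_d$ by adding a single new vertex $s$ to $G$ and inserting, for every $v \in V(G)$, a pair of edges $(v, s)$ and $(s, v)$ each of weight $d/2$. The key property of this construction is that for every pair $i, j \in V(G)$,
\begin{equation*}
 d_{G'_d}(i,j) \;=\; \min\bigl(d_G(i,j),\, d\bigr),
\end{equation*}
because the detour through $s$ contributes exactly $d$ and all original weights are non-negative. Consequently, $d_{G'_d}(i,j) = d$ if and only if $d_G(i,j) \geq d$, so some pair of original vertices realizes distance $d$ in $G'_d$ precisely when the diameter of $G$ is at least $d$.

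I would then feed $G'_d$ together with the constant matrix $D$ defined by $D_{i,j} = d$ for all $i,j$ (and entries involving $s$ set to their obvious true values, or handled by restricting the query to $V(G) \times V(G)$) to the assumed \COAPSPV{} algorithm. If \COAPSPV{} returns a pair $(i,j)$ for which $D_{i,j}$ is correct, then $d_{G'_d}(i,j)=d$ and so the diameter of $G$ is at least $d$; if instead \COAPSPV{} reports that no entry of $D$ is correct, then $d_{G'_d}(i,j)<d$ for all pairs, so the diameter of $G$ is less than $d$. Each call costs $\otilda(T(n))$, and $O(\log(nM)) = \otilda(1)$ binary-search steps over the range $[0, nM]$ of possible diameter values determine the diameter exactly. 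The final $n^2$ additive term accounts for reading and writing the $D$ matrix between calls. The only mildly delicate point, and the one I would check carefully, is verifying that a shortest path in $G'_d$ between two vertices of $G$ never picks up a spurious distance below $d_G(i,j)$ from traffic through $s$, which is immediate from the fact that any $s$-using path costs at least $d$; all remaining parity and finiteness issues (e.g., unreachable pairs) are handled by the weight-doubling step and by capping the search range at $nM$.
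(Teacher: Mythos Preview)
Your approach is essentially the paper's: binary-search on a candidate value $d$, build an auxiliary graph in which every original distance is capped at $d$, and ask \COAPSPV{} whether any entry of the all-$d$ matrix is a correct distance. The only difference is cosmetic—the paper caps distances by inserting a weight-$d$ edge between every ordered pair of vertices directly, whereas you route through a hub vertex $s$ with spokes of weight $d/2$; both constructions yield $d_{G'}(i,j)=\min(d_G(i,j),d)$ for $i,j\in V(G)$.

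One genuine slip to fix: your first suggestion for the $s$-row and $s$-column of $D$—filling them with their ``obvious true values''—breaks the reduction, because then \COAPSPV{} can legitimately return an $s$-entry on every call and you learn nothing about the diameter. Your alternative of ``restricting the query to $V(G)\times V(G)$'' is not something the \COAPSPV{} interface supports. The correct move is to fill those entries with a value guaranteed to be wrong (e.g.\ $-1$), so that the only way the oracle can report a correct entry is to exhibit $i\neq j$ in $V(G)$ with $d_{G'_d}(i,j)=d$. Your choice $D_{i,i}=d$ already handles the diagonal correctly for $d>0$. The paper's direct-edge construction sidesteps the extra-vertex issue entirely, which is its only real advantage.
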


\begin{proof}
	The outline of the proof is as follows. First, we show an algorithm for finding the solution of the \COAPSPV{} problem can be used as a black box for determining whether the diameter of a graph is greater than or equal to an integer number $d$. Then we run a binary search on $d$ to find the exact diameter of the graph. In the rest, we show how we can determine if the diameter of $G$ is at least some given value $d$.
	
	We construct a graph $G'$ from $G$ by taking all the vertices and edges of $G$ and adding an additional edge from every vertex of $G$ to every other vertex with weight $d$. By taking the minimum, multiple edges of $G'$ can become simple edges. With this construction, the diameter of $G'$ is at most $d$ since there exists a shortcut of weight $d$ between every two vertices. Moreover, if the diameter of $G'$ is exactly $d$, it means there are two vertices $x$ and $y$ in $G$ such that the distance of $y$ from $x$ is at least $d$. Otherwise, the distance of every vertex of $G$ from every other vertex is at most $d-1$. Thus, the diameter of $G$ is more than or equal to $d$ if and only if there exists a pair $(x,y)$ of vertices in $G'$ such that distance of $y$ from $x$ is precisely $d$. Let $D$ be an $n \times n$ matrix such that the entries on the diagonal are $0$ and all of the other entries are equal to $d$. If we give $G'$ and $D$ as inputs to the algorithm for $\COAPSPV$, it will report if any index of $D$ represents the true distance between the corresponding vertices in $G'$, and hence we can determine if the distance between any two vertices of $G'$ is exactly $d$ which is equivalent to $G$ having a diameter of no less than $d$.
\end{proof}

The following theorem follows directly from Lemma \ref{lemlemlemeto}.
\begin{theorem}
$\Diameter{} \subless \COAPSPV{}$.
\end{theorem}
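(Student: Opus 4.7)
The plan is to derive this theorem as an immediate corollary of Lemma \ref{lemlemlemeto}. Recall that $A \subless B$ is shorthand for the statement that any truly subcubic algorithm for $B$ can be converted into a truly subcubic algorithm for $A$, so the only task is to translate the quantitative conclusion of the lemma into this notation.

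First I would suppose a hypothetical $\otilda(n^{3-\epsilon})$ time algorithm for \COAPSPV{} with some constant $\epsilon > 0$. Instantiating Lemma \ref{lemlemlemeto} with $T(n) = n^{3-\epsilon}$ yields an algorithm for \Diameter{} running in time $\otilda(n^{3-\epsilon} + n^2)$. Since the $\subless$ relation is only nontrivial in the regime $\epsilon \le 1$, the $n^2$ summand is dominated and the total running time simplifies to $\otilda(n^{3-\epsilon})$, which is subcubic. This is exactly what is needed for $\Diameter{} \subless \COAPSPV{}$.

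There is no substantive obstacle here; all the real work lives in Lemma \ref{lemlemlemeto}, whose key idea is to augment the input graph with a uniform shortcut of weight $d$ between every pair of vertices, reducing the threshold question ``is the diameter of $G$ at least $d$?'' to a single \COAPSPV{} call on the all-$d$ candidate distance matrix, and then binary searching on $d$ with only a $\log(nM)$ overhead that is hidden by $\otilda$. The present theorem is a purely notational repackaging of that reduction.
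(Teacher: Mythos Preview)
Your proposal is correct and matches the paper exactly: the theorem is stated there as following directly from Lemma~\ref{lemlemlemeto}, and your argument just unpacks what that means for the $\subless$ notation. Your summary of the lemma's construction (uniform shortcut edges of weight $d$, all-$d$ candidate matrix, binary search on $d$) is also accurate.
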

\subsection{\CONEG{} to \Diameter{}}\label{sec4}
In this section, we provide a subcubic reduction from \CONEG{} to \Diameter{}.

\begin{lemma}\label{lemeto}
	Given an $\otilda(T(n))$ time algorithm for \Diameter{}, where $T(n)$ is polynomial in $n$, there exists an $\otilda(T(n)+n^2)$ time algorithm for \CONEG{}.
\end{lemma}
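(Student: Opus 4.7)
The plan is to encode the minimum-weight triangle through each vertex $u$ of $G$ as a shortest-path distance in an auxiliary graph $G'$, and then exploit that \Diameter{} is a $\max$ over pairs to find a vertex outside every negative triangle. Let $M$ bound the absolute values of the weights and set $W=10M$. I would take four disjoint copies $V_A,V_B,V_C,V_D$ of $V(G)$; for every pair $(u,v)$ with $u\neq v$ I would add undirected \emph{layer edges} $\{u_A,v_B\},\{u_B,v_C\},\{u_C,v_D\}$ each of weight $w(u,v)+W$; and between every pair of vertices of $G'$ \emph{other than} the diagonal pairs $(u_A,u_D)$ I would add a \emph{shortcut edge} of weight $L:=3W-1$.

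Writing $g(u)$ for the minimum weight of a triangle through $u$ in $G$, I would show that $d_{G'}(u_A,u_D)=3W+g(u)$ for every $u$: the cheapest layered route $u_A\to x_B\to y_C\to u_D$ with $x,y\neq u$ and $x\neq y$ has weight exactly $3W+g(u)$, while any alternative that uses at least one shortcut costs at least $L+(W-M)=4W-M-1$, which exceeds $3W+3M\geq 3W+g(u)$ because $W=10M$. A routine case analysis would also give that every other pair of $G'$ has distance at most $L$, either via the direct shortcut or via a layered path of weight $\leq 2W+2M<L$. Since the weights are integers and $L=3W-1$, it follows that $\operatorname{diam}(G')\geq 3W$ if and only if $\max_u g(u)\geq 0$, i.e., some vertex lies outside every negative triangle; a single \Diameter{} call on this $4n$-vertex, $O(n^2)$-edge graph thus decides \CONEG{}.

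To return a specific witness, I would binary-search on the candidate set: for $S\subseteq V$ let $G'_S$ be the above construction with $V_A$ and $V_D$ restricted to $S$ while $V_B=V_C=V$ remain full, so that its diagonal pairs are exactly $\{(u_A,u_D):u\in S\}$ and the same analysis yields $\operatorname{diam}(G'_S)\geq 3W$ iff some $u\in S$ has $g(u)\geq 0$. Halving $S$ uses $O(\log n)$ \Diameter{} calls on $O(n)$-vertex graphs plus $O(n^2)$ preprocessing, for a total running time of $\otilda(T(n)+n^2)$. The main technical delicacy, and the reason the analogous reduction $\NEG{}\subless\Diameter{}$ remains open, is calibrating the shortcut weight $L=3W-1$ so that it simultaneously collapses every non-diagonal distance below $3W$ yet is loose enough that no shortcut-based detour shortens the critical $(u_A,u_D)$ distance away from $3W+g(u)$. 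This $\max$-$\min$ alignment is exactly what makes \Diameter{} a natural complexity match for \CONEG{}'s ``$\exists u\colon g(u)\geq 0$'' structure, in contrast to the pure ``$\exists$ negative triangle'' of \NEG{}.
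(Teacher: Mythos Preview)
Your proposal is correct and takes essentially the same approach as the paper: both encode the minimum triangle weight through $u$ as the distance $d(u_A,u_D)$ in a four-layer copy $A,B,C,D$ of $V(G)$ with shifted edge weights, and then force every non-diagonal pair to have smaller distance so that the diameter reveals $\max_u g(u)$. The paper realizes the ``collapse the other distances'' step by adding two extra anchor layers $X,Y$ together with zero-weight edges $u_A\text{--}v_D$ for $u\neq v$, rather than your all-pairs shortcuts of weight $3W-1$, and it extracts the witness via the \Diameter{}/\DiameterE{} equivalence (Lemma~\ref{mosavian}) instead of an explicit binary search on $S$; these are cosmetic differences on the same reduction idea.
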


\begin{proof}\sloppy
	Since directed \Diameter{} is harder than its undirected version, we reduce \CONEG{} to undirected \Diameter{}. For every graph $G$, we create an undirected graph $G'$ with six times as many vertices. More precisely, $V(G')$ consists of six parts $A$, $B$, $C$, $D$, $X$, and $Y$. For every vertex $v \in V(G)$, we put vertices $v_A$, $v_B$, $v_C$, $v_D$, $v_X$, and $v_Y$ in parts $A$, $B$, $C$, $D$, $X$, and $Y$, respectively. Moreover, for every edge from a vertex $u$ to a vertex $v$ with weight $w$ in $E(G)$ we draw an edge from $u_A$ to $v_B$, $u_B$ to $v_C$, and from $u_C$ to $v_D$ with weight $w+H$ where $H=10M$. Furthermore, we add an edge from every $v_X$ to $v_A$ with weight $H$. Similarly, we draw an edge from every vertex $v_D$ to $v_Y$ with weight $H$. Finally, for every $u \neq v$ we add an edge from $u_A$ to $v_D$ with weight $0$.
	
	In the following, we show $G$ has a vertex $u$ which does not take part in any negative triangle if and only if the diameter of $G'$ is at least $5H$. Note that due to the construction of $G'$, the diameter of the graph is always the distance between a vertex of part $X$ to a vertex of part $Y$. Since we put an edge of weight $0$ from $u_A$ to $v_D$ for every $u \neq v$, the distance from every vertex $u_X$ to every vertex $v_Y$ is at most $3H$ for $u \neq v$. However, the distance between every vertex $v_X$ to $v_Y$ is more than $3H$. Therefore, the diameter of the graph is always from a vertex $v_X$ to a vertex $v_Y$. Note that, the distance between a vertex $v_A$ to $v_D$ is equal to the weight of the minimum weight triangle in $G$ that contains $v$ plus $3H$. Thus, the diameter of the graph is at least $5H$ if and only if there exists a vertex $v$ in $G$ which lies in no negative triangle.
	
	All that remains is to find a vertex which does not contribute to any negative triangle if such a vertex exists.
	Lemma \ref{mosavian} shows given a subcubic algorithm for \Diameter{} that only finds the length of the diameter we can obtain a subcubic algorithm that also finds the endpoints of the diameter.
	Therefore, we can find two vertices $v_X$ and $v_Y$ such that their distance is equal to the diameter of the graph in time $\otilda(n^{3-\delta})$ for some constant $\delta$. If the distance between $v_A$ and $v_D$ is at least $3H$, we can report $v$ as a vertex which does not contribute to any negative triangle; otherwise, every vertex of $G$ contributes to a negative triangle.
\end{proof}

Theorem \ref{thmthm} follows directly from Lemma \ref{lemeto}.
\begin{theorem}\label{thmthm}
$\CONEG{} \subless \Diameter{}$.
\end{theorem}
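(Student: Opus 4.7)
The plan is to derive this theorem as an immediate corollary of Lemma \ref{lemeto}, which already contains all of the technical content. The lemma gives the quantitatively precise statement: an $\otilda(T(n))$ algorithm for \Diameter{} converts into an $\otilda(T(n)+n^2)$ algorithm for \CONEG{}. To pass to the qualitative subcubic-reduction statement $\CONEG \subless \Diameter$, I would substitute $T(n) = n^{3-\epsilon}$ for an arbitrary $\epsilon > 0$: the resulting running time $\otilda(n^{3-\epsilon}+n^2)$ is dominated by $\otilda(n^{3-\epsilon})$, which is truly subcubic, matching the definition of $\subless$ given in Section~3. Beyond this bookkeeping there is nothing further to verify, so there is no real hard step.

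For orientation, I would recall the mechanism of Lemma \ref{lemeto}: from an input $G$ on $n$ vertices it builds a $6n$-vertex undirected graph $G'$ with layers $A,B,C,D,X,Y$, using a large offset $H=10M$. The offset is chosen large enough that every diameter-realizing pair of $G'$ must have the form $(v_X, v_Y)$ for some common vertex $v$, and further that the $v_X$-to-$v_Y$ distance equals $3H$ plus the minimum weight of a triangle through $v$ in $G$. Consequently, the diameter of $G'$ is at least $5H$ if and only if some vertex $v \in V(G)$ avoids every negative triangle, which is precisely the condition \CONEG{} decides. The construction itself takes $O(n^2)$ time, which accounts for the additive term in the lemma.

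The one subtlety I would flag is that \CONEG{} must return a specific witness vertex, whereas \Diameter{} only returns a numeric value. The reduction handles this by passing through the vertex variant \DiameterE{}, whose subcubic equivalence with \Diameter{} is established in Lemma \ref{mosavian}. Thus a subcubic algorithm for \Diameter{} yields a subcubic algorithm for \DiameterE{}, which is then composed with the construction of Lemma \ref{lemeto} to extract the witness vertex $v$, completing the reduction within the claimed time bound.
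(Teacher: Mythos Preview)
Your proposal is correct and matches the paper exactly: the paper likewise states that Theorem~\ref{thmthm} follows directly from Lemma~\ref{lemeto}, with the witness-vertex issue handled via Lemma~\ref{mosavian}. One small slip in your orientation recap: the $v_X$--$v_Y$ distance is $5H$ (not $3H$) plus the minimum triangle weight through $v$, since the two extra edges $v_Xv_A$ and $v_Dv_Y$ each contribute $H$; this does not affect the actual argument.
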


\subsection{\CONEG{} to \NEG{}}\label{sec5}
In this section, we reduce \CONEG{} to \NEG{}.
\begin{lemma}\label{lem:CoNeg2APSP}
	Given an $\otilda(T(n))$ time algorithm for \APSP{}, where $T(n)$ is polynomial in $n$, there exists an $\otilda(T(n)+n)$ time algorithm for \CONEG{}.
\end{lemma}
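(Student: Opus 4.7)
The plan is to reduce \CONEG{} to \APSP{} (and hence, by the subcubic equivalence of \APSP{} and \NEG{} established in \cite{focs}, to \NEG{}) via a single call to \APSP{} on a layered auxiliary graph. Given $G=(V,E,w)$ on $n$ vertices, I would build a graph $G'$ whose vertex set is the disjoint union of four copies $V_0,V_1,V_2,V_3$ of $V$, and for each directed edge $(u,v)\in E$ of weight $w(u,v)$ insert the three directed edges $(u_0,v_1)$, $(u_1,v_2)$, and $(u_2,v_3)$ in $G'$, each carrying the original weight $w(u,v)$, with no other edges. Because only forward-layer edges exist, every $v_0$-to-$v_3$ path in $G'$ uses exactly three edges, traverses some $u_1\in V_1$ and some $x_2\in V_2$, and has total weight $w(v,u)+w(u,x)+w(x,v)$; conversely every directed triangle $v\to u\to x\to v$ in $G$ induces such a path of the same weight. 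Hence the distance from $v_0$ to $v_3$ in $G'$ equals the minimum weight of a directed triangle through $v$ in $G$.

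Given this equivalence, the algorithm is immediate. First I would invoke the assumed \APSP{} routine on $G'$; since $|V(G')|=4n$ and $T$ is polynomial, this costs $\otilda(T(4n))=\otilda(T(n))$ time. Then I would scan the $n$ entries of the output distance matrix corresponding to the pairs $(v_0,v_3)$ for $v\in V$ in $O(n)$ additional time: if some such entry is non-negative, return the corresponding $v$ as a witness vertex that lies on no negative triangle; if all $n$ entries are strictly negative, report that every vertex is in a negative triangle. Combined with the $\otilda(T(n))$ cost of \APSP{} itself, this yields the claimed $\otilda(T(n)+n)$ bound for \CONEG{}, and together with $\NEG{}\subequal\APSP{}$ from \cite{focs} it gives the subcubic reduction $\CONEG{}\subless\NEG{}$.

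The main point to verify is that the layering faithfully captures honest triangles on three distinct vertices rather than degenerate walks. Any $v_0$-to-$v_3$ path that effectively revisits a vertex of $G$ must use an edge of the form $v_0\to v_1$, $v_2\to v_3$, or $u_1\to u_2$, each of which requires a self-loop in $G$; under the paper's standing convention that every missing edge, including every self-loop, carries weight $+\infty$, such walks contribute infinite weight and cannot beat a genuine triangle. So the only correctness issue reduces to a routine check of the self-loop convention, and the rest of the argument is bookkeeping; I do not expect any more serious obstacle than that.
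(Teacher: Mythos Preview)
Your proposal is correct and follows essentially the same approach as the paper: a four-layer auxiliary graph in which the distance from $v_0$ to $v_3$ equals the minimum weight of a triangle through $v$, followed by a single \APSP{} call and a linear scan over the $n$ diagonal pairs. The only cosmetic difference is that the paper shifts every layer edge by $+M$ (so the threshold becomes $3M$ rather than $0$); your explicit handling of the degenerate self-loop case is in fact more careful than the paper's own proof.
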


The proof of this lemma is in Appendix \ref{conegnegapp}.
The following theorem follows directly from Lemma \ref{lem:CoNeg2APSP} and $\APSP \subequal \NEG{}$.
\begin{theorem}\label{thm:CoNeg2Neg}
	$\CONEG \subless \NEG{}$.
\end{theorem}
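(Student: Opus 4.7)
The plan is to establish Lemma~\ref{lem:CoNeg2APSP} — that an $\otilda(T(n))$-time algorithm for \APSP{} yields an $\otilda(T(n)+n)$-time algorithm for \CONEG{} — and then chain with the known equivalence $\APSP{} \subequal \NEG{}$ from~\cite{focs} to conclude $\CONEG \subless \NEG{}$. The chaining step is a one-line composition of subcubic reductions, so essentially all of the work lives in the lemma.

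For the lemma, I would use a standard layered-graph construction. Build an auxiliary graph $G'$ on vertex set $V(G) \times \{0,1,2,3\}$, and, for every directed edge $(u,v)$ of $G$ with weight $w(u,v)$, insert the three forward edges $(u,i) \to (v,i+1)$ of weight $w(u,v)$ for $i \in \{0,1,2\}$ (and no intra-layer or backward edges). After preprocessing $G$ to remove self-loops (equivalently, giving them weight $+\infty$), any directed walk from $(v,0)$ to $(v,3)$ in $G'$ uses exactly three edges, each advancing one layer, so it corresponds to a 3-edge closed walk $v \to u \to w \to v$ in $G$. The no-self-loop condition forces $u \neq v$, $u \neq w$, and $w \neq v$, so this walk is a genuine triangle through $v$. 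Therefore $d_{G'}((v,0),(v,3))$ equals the minimum weight of a triangle through $v$ in $G$, and $v$ lies in no negative triangle iff $d_{G'}((v,0),(v,3)) \geq 0$.

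Algorithmically, I would invoke the given \APSP{} oracle once on $G'$, which has $4n$ vertices; since $T$ is polynomial, this costs $\otilda(T(4n)) = \otilda(T(n))$. In $O(n)$ additional time I would scan the $n$ values $d_{G'}((v,0),(v,3))$: if any of them is non-negative, output the corresponding $v$ as a vertex lying outside every negative triangle; otherwise report that every vertex is contained in some negative triangle. This matches the $\otilda(T(n)+n)$ bound promised by the lemma, and chaining with $\APSP \subequal \NEG$ gives the theorem.

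The main (rather modest) obstacle is the \emph{degenerate walk} issue: a minimum-weight 3-edge walk from $(v,0)$ to $(v,3)$ must correspond to three genuinely distinct vertices in $G$, not a closed walk that doubles back through $v$ or sits at a single intermediate vertex. This is handled cleanly by the standard assumption that \NEG{} instances are self-loop free; preprocessing any self-loops to weight $+\infty$ in $O(n)$ time makes the assumption harmless and preserves the existence of every negative triangle on distinct vertices. No other subtleties arise — in particular, we never need to materialize the full $4n \times 4n$ distance matrix, only the $n$ diagonal-like entries $(v,0) \to (v,3)$.
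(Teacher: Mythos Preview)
Your proof is correct and takes essentially the same approach as the paper: a four-layer copy of $G$, one call to the \APSP{} oracle on the $4n$-vertex graph, an $O(n)$ scan of the distances from $(v,0)$ to $(v,3)$, and then chaining with $\APSP \subequal \NEG$. The paper's only cosmetic difference is that it shifts every layer-crossing edge weight by $+M$ (so the threshold becomes $3M$ rather than $0$), and it leaves the self-loop / degenerate-walk issue implicit where you handle it explicitly.
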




\bibliographystyle{plain}
\bibliography{subcubic}

\newpage

\appendix
\section{Equivalence of the Vertex Version and the Numerical Version}\label{trivialapp}
\begin{proofof}{Lemma \ref{mosavian}}
	For each of these three problems, the reduction from the numerical version to the variant in which the output of the problem is a vertex is trivial. This holds since having an optimal solution of the vertex version, we can simply run a single-source-shortest-path algorithm and find the numerical solution of the problem in time $O(n^2)$.
	
	To reduce the vertex version to the numerical version, we assume there is an algorithm that solves the numerical version and will access the algorithm as a black box $O(\log n )$ times. Note that this keeps the reduction subcubic since the number of accesses is less than $n^\epsilon$ for any $\epsilon > 0$. The overall idea is the same for all of the three problems. We first show how to use the solver of the numerical version to see whether or not a set $S\subseteq V$ contains a solution of the vertex version. Then everything boils down to a binary search: Beginning from a set of vertices $S=V(G)$, at each step we divide $S$ into two subsets of size fairly equal $S_1$ and $S_2$ and search for the solution of the vertex version in either $S_1$ or $S_2$. This cuts the size of the search space in half at every step and finally finds the desired vertex in at most $\lceil \log(n) \rceil$ steps.
	
	In the following, for each of the three problems, we show how to use a solver of the numerical version to see whether or not there exists a solution of the vertex version in $S$.
	\begin{itemize}
		\item \Diameter{}: Suppose the diameter of $G$ is equal to $d$. We construct $G'$ from $G$ by multiplying all edge weights by three and adding a dummy vertex $x'$ for each $x\in S$ and connecting them with two edges of weight one, an edge from $x$ to $x'$ and an edge from $x'$ to $x$. Let $d'$ denote the diameter of $G'$. If no vertex in $S$ is an endpoint of a diameter of $G$ then $d'$ will be equal to $3d$. Otherwise, $d'$ will be $3d+1$ or $3d+2$.
				
		\item \Radius{}: Suppose the radius of $G$ is $r$. We construct $G'$ from $G$ by adding a dummy node $x$ and connecting all vertices of $S$ to $x$ with edges of weight $r$. Let $r'$ be the radius of $G'$. If $r'>r$ then all centers of $G$ are in $V\backslash S$, since they need to reach $x$ through $S$ in $G'$. Otherwise, there exists a center of $G$ in $S$ and $r'=r$.
		\item \Median{}: Suppose $m$ is the value of the median of $G$. We construct $G'$ from $G$ by adding a dummy vertex $x$. Let's $Q$ be a large number. We connect all vertices of $S$ to $x$ with edges of weight $Q$. Moreover, we connect all vertices outside of $S$ to $x$ with edges of weight $Q+1$. Let us use $m'$ to denote the value of median in $G'$. If there exists a median vertex of $G$ in $S$, then that vertex can be a median vertex of $G'$, too. In this case, $m'=m+Q$. If no such a vertex exists, then all medians of $G'$ are outside of $S$ and $m'=m+Q+1$.
	\end{itemize}
	\begin{figure}[h]
		\includegraphics[width=4cm]{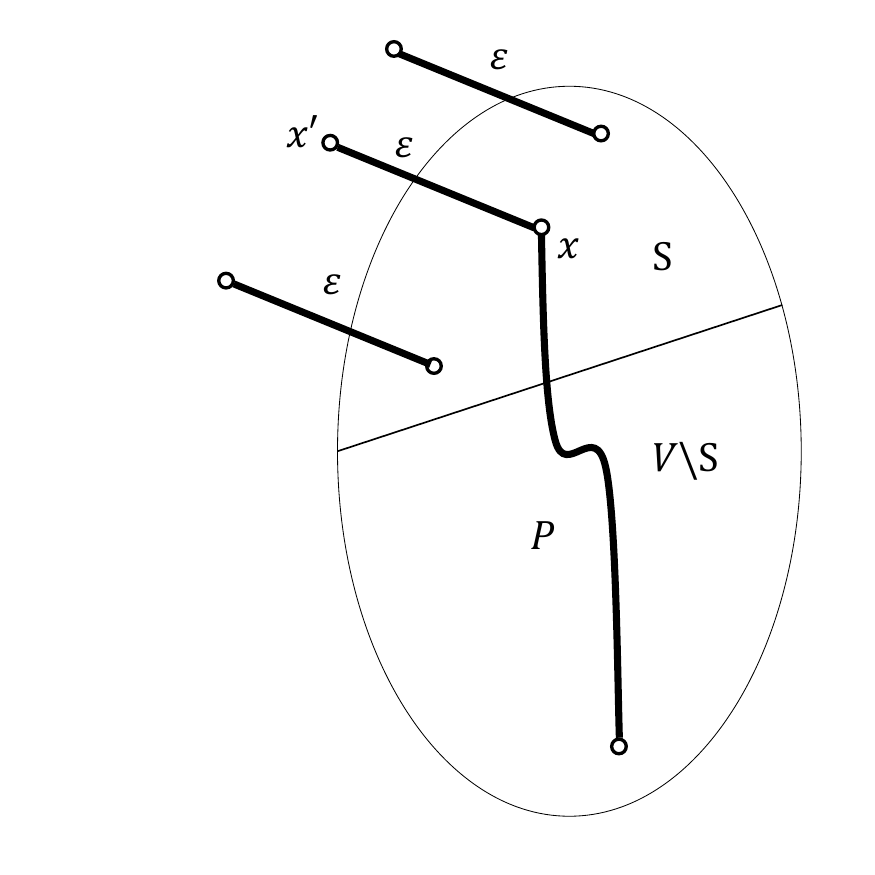}
		\includegraphics[width=4cm]{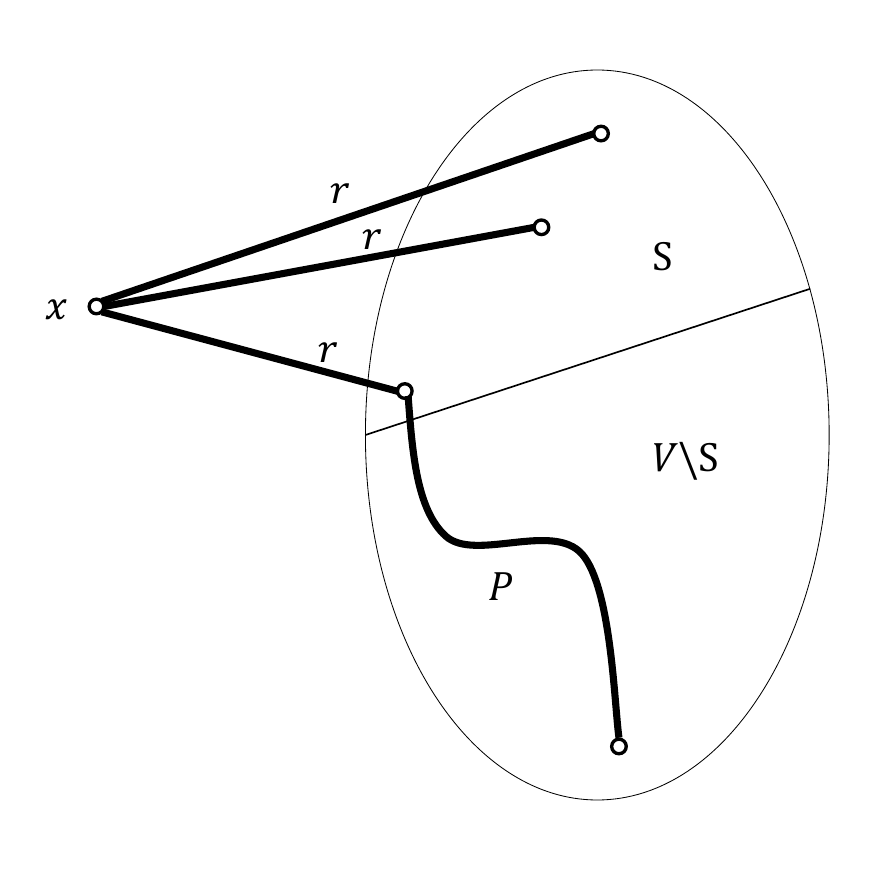}
		\includegraphics[width=4cm]{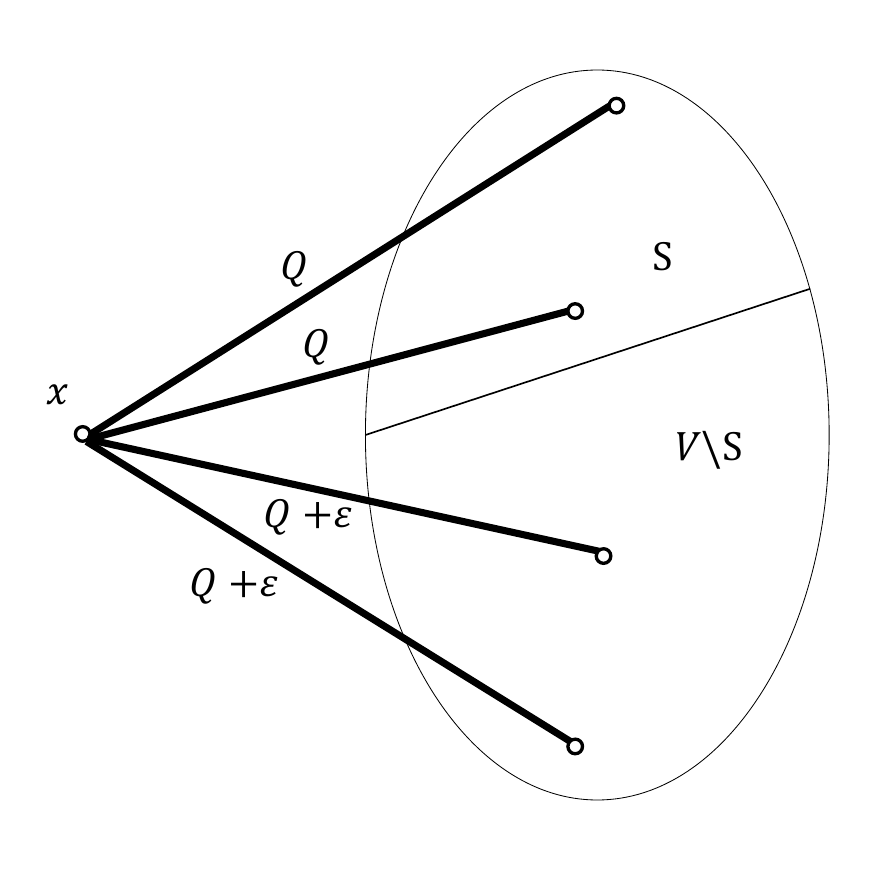}
		\caption{(i) Left figure shows the reduction for \Diameter{}, (ii) the figure in the middle illustrates the reduction for \Radius{}, and (iii) the figure on the right shows the reduction for \Median{}. }
	\end{figure}
\end{proofof}
\section{Reducing \CONEG{} to \NEG{}}\label{conegnegapp}
\begin{proofof}{Lemma \ref{lem:CoNeg2APSP}}
	For every graph $G$, we create a graph $G'$ with four times as many vertices. To be more precise, $V(G')$ consists of four parts $A$, $B$, $C$, and $D$. For every vertex $v \in V(G)$, we put vertices $v_A$, $v_B$, $v_C$, and $v_D$ in parts $A$, $B$, $C$, and $D$, respectively. Furthermore, for an edge from a vertex $u$ to a vertex $v$ with weight $w$ in $E(G)$ we add an edge from $u_A$ to $v_B$, $u_B$ to $v_C$, and from $u_C$ to $v_D$ with weight $w+M$. A vertex $v \in V(G)$ is in a negative triangle if and only if the distance between $v_A$ and $v_D$ is less than $3M$. To this end, we run the $\otilda(T(n))$ time algorithm for \APSP{} on $G'$ and check the distance between $v_A$ and $v_D$ for each $v \in V(G)$ and see whether a vertex exists that does not belong to any negative triangle. Therefore, we can solve \CONEG{} in $\otilda(T(n)+n)$ time.
\end{proofof}

\end{document}